\tikzset{
  photon/.style={decorate, decoration={snake}, draw=black},
  fermion/.style={draw=black, postaction={decorate},decoration={markings,mark=at position .55 with {\arrow{>}}}},
}
\def\ie{\begin{equation}\begin{aligned}}
\def\fe{\end{aligned}\end{equation}}
\newtheorem{theorem}{Theorem}[section]
\newtheorem{lemma}[theorem]{Lemma}
\def\p{\partial}
\newcommand{\LRA}[1]{\langle #1  \rangle}
\newcommand{\LRB}[1]{[ #1 ]}
\newcommand{\MixLeft}[3]{\langle #1 | #2 | #3 ]}
\newcommand{\MixRight}[3]{[ #1 | #2 | #3 \rangle}
\newcommand{\RA}[1]{| #1 \rangle}
\newcommand{\LA}[1]{\langle #1 |}
\newcommand{\RB}[1]{| #1 ]}
\newcommand{\LB}[1]{[ #1 |}
\newcommand{\bo}{\bold{1}}
\newcommand{\bt}{\bold{2}}
\newcommand{\bi}{\bold{i}}
\newcommand{\bj}{\bold{j}}
\newcommand{\etas}[2]{(\eta_{#1}^{#2}\cdot\eta_{#1}^{#2})}
\newcommand{\Math}[1]{\mathcal{ #1}}
\newcommand{\ETAsU}[2]{(\eta_1^{#1}\cdot\eta_1^{#2})-(\eta_2^{#1}\cdot\eta_2^{#2})}
\newcommand{\ETAsL}[2]{(\eta_{1 #1}\cdot\eta_{1 #2})-(\eta_{2 #1}\cdot\eta_{2 #2})}
\newcommand{\LCT}{\epsilon}
\def\Ha{\mathcal{H}}
\def\Hb{\bar{\mathcal{H}}}
\def\HHa{(\Ha\cdot\Ha)}
\def\HHb{(\Hb\cdot\Hb)}
\def\HHab{(\Ha\cdot\Hb)}
\newcommand{\DaRaise}[1]{\LRA{\zeta\mathcal{D}^{#1}}_+}
\newcommand{\DbRaise}[1]{\LRB{\xi\tilde{\mathcal{D}}_{#1}}_+}
\newcommand{\DaLower}[1]{\LRA{\zeta\mathcal{D}^{#1}}_-}
\newcommand{\DbLower}[1]{\LRB{\xi\tilde{\mathcal{D}}_{#1}}_-}
\newcommand{\Da}[1]{\LRA{\zeta\mathcal{D}^{#1}}}
\newcommand{\Db}[1]{\LRB{\xi\mathcal{D}_{#1}}}
\newcommand{\tZ}{\theta_Z}
\newcommand{\tZa}{\theta_{Z_{12}}}
\newcommand{\tZb}{\theta_{Z_{34}}}
\newcommand{\CosT}{c_z}
\newcommand{\SinT}{s_z}
\newcommand{\CosTa}{\cos(\tZa)}
\newcommand{\SinTa}{\sin(\tZa)}
\newcommand{\CosTb}{\cos(\tZb)}
\newcommand{\SinTb}{\sin(\tZb)}
\newcommand{\CosTsqr}{c_z^2}
\newcommand{\SinTsqr}{s_z^2}
\newcommand{\Pmulti}[1]{\Math{P}^{#1}}
\newcommand{\MmmP}{\Math{M}({\bo},{\bt},\Pmulti{})}
\newcommand{\MmmPr}{\Math{M}_{res}}
\newcommand{\eqd}{\stackrel{\delta}{=}}
\newcommand{\AAA}{\Math{A}}
\newcommand{\AAAr}{\Math{A}_{res}}
\newcommand{\Ah}{\bold{h}}
\newcommand{\Ag}{\bold{g}}
\newcommand{\Af}{\bold{f}}
\newcommand{\field}[1]{\left|#1\right\rangle}
\newcommand{\Msasb}[3]{{M}({\bo}^{(I_1\cdots I_{2{#1}})},{\bt}^{(J_1\cdots J_{2{#2}})},P^{#3})}
\newcommand{\barMsasb}[3]{\bar{M}({\bo}^{(I_1\cdots I_{2{#1}})},{\bt}^{(J_1\cdots J_{2{#2}})},P^{#3})}
\newcommand{\sMsasb}[3]{\Math{M}_Z({\bo}^{(I_1\cdots I_{2{#1}})},{\bt}^{(J_1\cdots J_{2{#2}})},\Pmulti{#3})}
\newcommand{\sMZaZbsasb}[3]{\Math{M}_{Z_{12},Z_{34}}({\bo}^{(I_1\cdots I_{2{#1}})},{\bt}^{(J_1\cdots J_{2{#2}})},\Pmulti{#3})}
\newcommand{\tQ}{\tilde Q}
\journal{Nuclear Physics B}
\begin{document}

\begin{frontmatter}

\title{Non-BPS Supersymmetric 3pt Amplitude for \\One Massless, Two Equally Massive Particles}

%----------------------------------------------------------------------------------------------------
% Add an author
% USAGE: \AddAuthor{name}{primary affiliation}{secondary affiliation}{external affiliation}

\def\AddAuthor#1#2#3#4{%
	\def\Name{#1}%
	\def\PriAf{#2}%
	\def\SecAf{#3}%
	\def\ExtAf{#4}%
	\def\empty{}%
	
	\ifx\PriAf\empty
		% no primary affiliation, only external
		\author{#1\fnref{#4}}%
	\else
		\ifx\SecAf\empty
			\ifx\ExtAf\empty
				% only primary affiliation
				\author[#2]{#1}%
			\else
				% primary and external affiliation
				\author[#2]{#1\fnref{#4}}%
			\fi
		\else
			\ifx\ExtAf\empty
				% primary and secondary affiliation
				\author[#2,#3]{#1}%
			\else
				% primary, secondary and external affiliation
				\author[#2,#3]{#1\fnref{#4}}%
			\fi
		\fi
	\fi
}

%----------------------------------------------------------------------------------------------------
% Add an author
% USAGE: \AddCorrespondingAuthor{name}{primary affiliation}{secondary affiliation}{external affiliation}{footnote text}{email}

\def\AddCorrespondingAuthor#1#2#3#4#5#6{%
	\def\Name{#1}%
	\def\PriAf{#2}%
	\def\SecAf{#3}%
	\def\ExtAf{#4}%
	\def\empty{}%
	
	\ifx\PriAf\empty
		% no primary affiliation, only external
		\author{#1\fnref{#4}\corref{CA}}%
	\else
		\ifx\SecAf\empty
			\ifx\ExtAf\empty
				% only primary affiliation
				\author[#2]{#1\corref{CA}}%
			\else
				% primary and external affiliation
				\author[#2]{#1\fnref{#4}\corref{CA}}%
			\fi
		\else
			\ifx\ExtAf\empty
				% primary and secondary affiliation
				\author[#2,#3]{#1\corref{CA}}%
			\else
				% primary, secondary and external affiliation
				\author[#2,#3]{#1\fnref{#4}\corref{CA}}%
			\fi
		\fi
	\fi

	\ead{#6}%
	\cortext[CA]{#5}%
}

%----------------------------------------------------------------------------------------------------
% Add a TOTEM institute
% USAGE: \AddInstitute{reference as in AddAuthor}{address}

\def\AddInstitute#1#2{%
	\address[#1]{#2}%
}

%----------------------------------------------------------------------------------------------------
% Add an institute not in the TOTEM collaboration
% USAGE: \AddExternalInstitute{reference as in AddAuthor}{address}

\def\AddExternalInstitute#1#2{%
	\fntext[#1]{#2}%
}

%----------------------------------------------------------------------------------------------------
%----------------------------------------------------------------------------------------------------
% include the actual data

\AddAuthor{Bo-Ting Chen}{1}{}{}%
\AddInstitute{1}{Department of Physics and Astronomy, National Taiwan University, Taipei 10607, Taiwan}

\date{\today}

\begin{abstract}
In this paper, the non-BPS amplitudes ($Z<2m$) are considered. Utilizing on-shell methods, the three point amplitudes of two equal-mass particles and one massless particle were constructed, where the two massive particles are non-BPS states. We verify the result by matching the $Z=0$ and BPS limit with $\Math{N}=2$ supersymmetry. As an application we derive the non-BPS coupling for $\Math{N}=4$ super-Maxwell and supergravity.
\end{abstract}

\end{frontmatter}

%--------------------------------------------------

\section{Introduction}

Spinor helicity formalism enables us to derive S-matrix by on-shell formulation. Unlike Feynman rules, the formulation does not introduce gauge redundancy into the computation. Rather, the amplitude can be fully determined by the momentum and the spin polarization of the external particles. Recently, spinor helicity formalism is adapted to describe four-dimensional scattering amplitudes for particles of any mass and spin~\cite{Arkani-Hamed:2017jhn}. A 3-pt amplitude of two equal massive particles with mass $m$ and a massless particle with helicity $h$ can be written in spinor helicity basis, see~\cite{Arkani-Hamed:2017jhn},
\ie
\label{eqn: amplitude without SUSY}
M^{(I_1\cdots I_{2s_1})(J_1\cdots J_{2s_2})h}=
\left(\lambda_1\right)^{I_1}_{\alpha_1}\cdots\left(\lambda_1\right)^{I_{s_1}}_{\alpha_{s_1}}
\left(\lambda_2\right)^{I_2}_{\beta_2}\cdots\left(\lambda_2\right)^{I_{s_2}}_{\beta_{s_2}}
M^{(\alpha_1\cdots \alpha_{2s_1})(\beta_1\cdots \beta_{2s_2})h}
\;.\fe

Supersymmetry (SUSY) requires on-shell fermionic variables. Spinor helicity formalism then must be formulated in on-shell superspace. The formulation in massless~\cite{Elvang:2015rqa} and massive~\cite{Boels:2011zz,Herderschee:2019ofc} on-shell superspace were subsequently constructed. Here we consider extended SUSY with non-vanishing central charge, using superamplitudes in $\Math{N}=2$ SUSY as building blocks. For extended $\Math{N}=2$ SUSY, the algebra takes the form
\ie
\begin{cases}
&\{ Q^A_{\alpha}, \tQ_{\dot{\alpha}B} \}=p_{\alpha\dot{\alpha}}\delta^A_B
\\
&\{ Q^A_{\alpha}, Q^B_{\beta} \}={1\over2} Z^{AB}\epsilon_{\alpha\beta}
\\
&\{ \tQ_{\dot{\alpha}A}, \tQ_{\dot{\beta}B} \}=-{1\over2} Z_{AB}\epsilon_{\dot{\alpha}\dot{\beta}}
\end{cases}
\;\;;\;\;
Z_{AB}=
\begin{bmatrix}
0&-Z\\
Z&0
\end{bmatrix}_{AB}
=Z\cdot\epsilon_{AB}
\;,\fe
and the central charge of a massive particle has a bound, $Z\le2m$. BPS states are the states that saturate this bound. The solution to the amplitude of BPS states are studied in~\cite{Cachazo:2018hqa,Caron-Huot:2018ape}.

\begin{figure}[H]
\label{fig: 3-pt vertex}
\centering
\begin{flalign}
\sMsasb{s_1}{s_2}{h}=\;\;\;\;\;\;\;\;\;\;\;\;\;\;\;\;\;\;\;\;\;\;\;\;\;\;\;\;\;\;\;\;\;\;\;\;\;\;\;\;\;\;\;\;\;\;\;\;
\end{flalign}
\begin{tikzpicture}[node distance=0.65cm and 0.65cm]
\coordinate[] (v1);
\coordinate[right=of v1] (v2);
\coordinate[right=of v2] (v3);
\coordinate[right= of v3] (v4);
\coordinate[above left = 1.75cm and 1.75cm of v1,label=left :$\bo^{(I_1\cdots I_{2s_1})}$ ] (e1);
\coordinate[below left = 1.75cm and 1.75cm of v1,label=left :$\bt^{(J_1\cdots J_{2s_2})}$ ] (e2);
\draw[fermion] (e1) -- (v1) node[midway, sloped, above=0.1cm,font=\footnotesize] {$p_1, (m,Z)$};
\draw[fermion] (e2) -- (v1) node[midway, sloped, below=0.1cm,font=\footnotesize] {$p_2, (m,-Z)$};
\draw[photon] (v1) -- (v4) node[midway,below=0.1cm] {$\longleftarrow$} node[right] {$h$} node[midway, above=0.1cm]{$P$};
\end{tikzpicture}
\caption{A 3-pt superamplitude of two massive multiplets with equal mass $m$ and a massless multiplet with helicity $h$. Note that the central charges of the two massive particles are opposite to each other ($Z$ and $-Z$), due to central charge conservation.}
\end{figure}
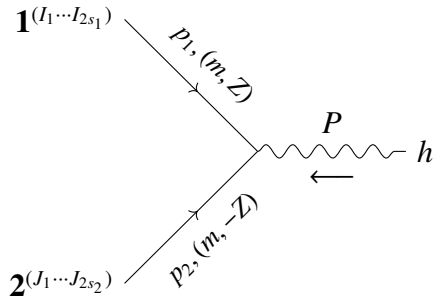

In this paper, we will consider the non-BPS three point superamplitude shown in Figure \ref{fig: 3-pt vertex}. The amplitude $\Math{M}_Z$ is made up of two massive multiplets, with spin-$2s_1$ and spin-$2s_2$, coupling a spin-$h$ massless multiplet, where the two massive multiplets have equal masses $m$ and opposite central charges ($Z$ and $-Z$, respectively)\footnote{To distinguish supersymmetric amplitudes from non-supersymmetric amplitudes, we denote supersymmetric amplitudes by curly alphabet $\Math{M}$.}. We require $\Math{M}_Z$ satisfy supersymmetry, i.e., 
\ie
Q_\alpha^A \Math{M}_Z=\tQ_{\dot{\alpha}}^{A} \Math{M}_Z=0
\;,\fe
where $Q_\alpha^A$ and $\tQ_{\dot{\alpha}}^{A}$ are the supersymmetric generators, and $A=1,2$ for $\Math{N}=2$ SUSY. By examining the the form of generators, we can show that $\Math{M}_Z$ is a function of $\delta^A$, $\Ha^A$, and $\Hb^A$, which are defined in \eqref{eqn: the building blocks}. In addition, we also show that we can always factorize the superamplitude with spinning multiplets in $\Math{N}=2$ SUSY into two parts,
\ie
\label{M and A}
&\sMsasb{s_1}{s_2}{h}=\Msasb{s_1}{s_2}{h}\cdot \AAA\left(Z,m\right)
\\
&Q_\alpha^A \AAA=0\;;\;\tQ_{\dot{\alpha}A} \AAA=0
\;.\fe
One of them is a bosonic factor $\Msasb{s_1}{s_2}{h}$, which carries the little group (LG) indices, including both the $SU(2)$ indicies for the massive and $U(1)$ for the massless multiplets. The other is a LG neutral and SUSY invariant quantity $\AAA\left(Z,m\right)$\footnote{Note that SUSY only constraints $\AAA$, but not the bosonic factor $\Msasb{s_1}{s_2}{h}$.}. Therefore, to solve $\Math{M}_Z$, we need to solve $\AAA$. There are 2 solutions for $\AAA$, and $\Math{M}_Z$ is a combination of the two solutions, see \eqref{eqn: spinning multiplet the most general form}.  We verify the solutions of $\AAA$ by matching to known results, i.e., the $Z=0$ limit~\cite{Herderschee:2019ofc} and the BPS limit~\cite{Caron-Huot:2018ape}.

As as application, we use our results in $\Math{N}=4$ SUSY. The presence central charge breaks R-symmetry from $SU(4)$ to $SU(2)\otimes SU(2)$, and the central charge is
\ie
\label{eqn: N=4 central charge-introduction}
Z_{AB}=
\begin{bmatrix}
0&-Z_{12}&0&0\\
Z_{12}&0&0&0\\
0&0&0&-Z_{34}\\
0&0&Z_{34}&0
\end{bmatrix}
\;,\fe
where $A,B=1,2,3,4$. We can then treat $\AAA$ as a "seed" for the superamplitude in $\Math{N}=4$ SUSY. As an example,
\ie
\Math{M}_{Z_{12},Z_{34}}(\bold{\Phi},\bold{\Phi},\phi)=\AAA_{12}\cdot\AAA_{34}
\;,\fe
where the subscripts $12$ or $34$ indicate which projected $SU(2)$ group they are describing. As in $\Math{N}=2$ SUSY, the superamplitude for spinning multiplets in $\Math{N}=4$ SUSY can be factorized as well, similar to \eqref{M and A} (see \eqref{eqn: spinning multiplet the most general form, N=4} for more details). We then proceed to consider super Maxwell theories and super-gravity (SUGRA) theories by choosing suitable bosonic factors. In the former case, the massless multiplet carries helicity $+1$, while in the latter case, there are 2 conjugated superamplitudes, and the massless multiplets in the superamplitudes carry helicity $2$ and helicity $0$.

In section \ref{section: The generators and the building blocks}, we show that $\Math{M}_Z$ is a function of $\delta^A$, $\Ha^A$, and $\Hb^A$ by examining the generators. In section \ref{section: N=2}, we solve for $\AAA$ and $\Math{M}_Z$ by requiring them being SUSY invariant. In section \ref{section: Z=0 and BPS and massless}, we compare our results in the $Z=0$ limit and the BPS limit with previous works. In section \ref{section: N=4 super-Maxwell and SUGRA}, we apply the results to $\Math{N}=4$ non-BPS SUSY, and explore super-Maxwell and super-gravity theories.

\section{The building blocks for the 3pt amplitude}
\label{section: The generators and the building blocks}

As mentioned in the introduction, we are dealing with the 3pt amplitude $\MmmP$ consisting of two equal-mass $m$, non-BPS particles (leg 1 and 2) and a massless particle (leg $P$), see Figure \ref{fig: 3-pt vertex}. In this chapter, we will introduce $\Math{N}=2$, non-BPS generators. Given the generators, we find the general form of $\MmmP$ is
\ie
\MmmP=\left[\prod_{A=1}^2\delta^A\right]\cdot\MmmPr\left(\Ha,\Hb\right)
\;,\fe
where
\ie
\label{eqn: the building blocks}
\delta^A&\equiv\sum_{i=1}^2\LRA{P\bi^I}\eta_{iI}^A\;\;,\;\;
\Ha^A &\equiv  \frac{1}{\LRA{\zeta P}} \sum_{i=1}^2\LRA{\zeta \bi^I}\eta_{iI}^A\;\;,\;\;
\Hb^A &\equiv  \frac{1}{\LRB{\xi P}} \sum_{i=1}^2\LRB{\xi \bi^I}\sigma_i\eta_{iI}^A
\;,\fe
are the building blocks for the 3pt amplitude ($\LA{\zeta}$ and $\LB{\xi}$ are reference spinors, and the dependence on the reference spinors will drop out if $\Ha$ and $\Hb$ is evaluated on the support of $\left[\prod_{A}\delta^A\right]$). In other words, the superamplitude is proportional to a delta function $\left[\prod_{A}\delta^A\right]$, and the rest of the amplitude is a function of $\Ha$ and $\Hb$.

\subsection{The generators}
The generators for massless particles satisfy the anti-commutation rules
\ie
\label{eqn: the anti-commutation relations- massless}
&\{ Q^A_{\alpha}, \tQ_{\dot{\alpha}B} \}=p_{\alpha\dot{\alpha}}\delta^A_B
\\
&\{ Q^A_{\alpha}, Q^B_{\beta} \}=0
\\
&\{ \tQ_{\dot{\alpha}A}, \tQ_{\dot{\beta}B} \}=0
\;,\fe
while for massive particles, the generators satisfy
\ie
\label{eqn: the anti-commutation relations- massive}
&\{ Q^A_{\alpha}, \tQ_{\dot{\alpha}B} \}=p_{\alpha\dot{\alpha}}\delta^A_B
\\
&\{ Q^A_{\alpha}, Q^B_{\beta} \}={1\over2} Z^{AB}\epsilon_{\alpha\beta}
\\
&\{ \tQ_{\dot{\alpha}A}, \tQ_{\dot{\beta}B} \}=-{1\over2} Z_{AB}\epsilon_{\dot{\alpha}\dot{\beta}}
\;,\fe
where $Z_{AB}=Z\cdot\LCT_{AB}$ is the central charge. Note that $Z$ has a bound, $0\le |Z|\le 2m$, and BPS states are states that satisfy $|Z|=2m$, where calculations will be lot simpler. However, we are interested in the amplitude for general $Z$.

In the 3pt amplitude we are considering, the central charges of the two massive particles must carry opposite signs, $Z_1+Z_2=0$, due to central charge conservation. We then define
\ie
\sin({2\tZ})\equiv\frac{Z_1}{2m}
\;.\fe
The bound $-\frac{\pi}{2}\le\tZ\le\frac{\pi}{2}$ follows directly from the bound of the central charge $0\le |Z|\le 2m$, and BPS limit happens at $\tZ=\frac{\pi}{2}$. For convenience, we also define $\CosT=\cos(\tZ)$ and $\SinT=\sin(\tZ)$.

To be a superamplitude, $\MmmP$ should satisfy
\ie
\label{eqn:QA and Q+A}
&Q_\alpha^A\MmmP=0\;;\;\tQ_{\dot{\alpha}A}\MmmP=0
\;,\fe
where $Q_\alpha^A$ and $\tQ_{\dot{\alpha}A}$ are the sum of the supersymmetric generators of the particles
\ie
\label{Q and Q+ sum}
Q_\alpha^A\equiv Q_{1\alpha}^A+Q_{2\alpha}^A+Q_{P\alpha}^A\;;\;\tQ_{\dot{\alpha}A}\equiv \tQ_{1\dot{\alpha}A}+\tQ_{2\dot{\alpha}A}+\tQ_{P\dot{\alpha}A}
\;.\fe
By introducing a set of Grassmann variables $\{\eta_{iI}^A,\eta_{P}^A\}\;\;(i=1,2)$\footnote{In this paper, the Grassmann variables' indices are defined as:\ie\begin{cases}i\text{: the label of the massive particles, }i=1,2\\I\text{: the LG indices for the massive particles}\\A\text{: the R-charge indices.}\end{cases}\;.\fe}, we can express the generators as
\ie
\begin{cases}
Q_{i\alpha}^A=\lambda_{i\alpha}^I\left(\CosT\eta_{iI}^A+\SinT\sigma_i\frac{\partial}{\partial\eta_{iA}^I}\right)
\\
\tQ_{i\dot{\alpha}A}=\tilde{\lambda}_{i\dot{\alpha}}^I\left(\CosT\frac{\partial}{\partial\eta_{i}^{IA}}-\SinT\sigma_i\eta_{iIA}\right)
\end{cases}
\;;\;
&\begin{cases}
Q_{P\alpha}^A=\lambda_{P\alpha}\eta_P^A
\\
\tQ_{P\dot{\alpha}A}=\tilde{\lambda}_{P\dot{\alpha}}\frac{\partial}{\partial\eta_P^A}
\end{cases}
\;,\fe
where $(\sigma_1,\sigma_2)=(1,-1)$. One can verify them by substituting them into \eqref{eqn: the anti-commutation relations- massless} and \eqref{eqn: the anti-commutation relations- massive}. Therefore, the sum of the generators are
\ie
\label{eqn: QQ+ full}
&Q_\alpha^A=\sum_{i=1}^2\lambda_{i\alpha}^I\left(\CosT\eta_{iI}^A+\SinT\sigma_i\frac{\partial}{\partial\eta_{iA}^I}\right)+\lambda_{P\alpha}\eta_P^A
\\
&\tQ_{\dot{\alpha}A}=\sum_{i=1}^2\tilde{\lambda}_{i\dot{\alpha}}^I\left(\CosT\frac{\partial}{\partial\eta_{i}^{IA}}-\SinT\sigma_i\eta_{iIA}\right)+\tilde{\lambda}_{P\dot{\alpha}}\frac{\partial}{\partial\eta_P^A}
\;.\fe

\subsection{The fermionic delta function}
\label{subsection:  the delta function}

From \eqref{eqn: QQ+ full}, we see that the generators can be separated into a multiplicative part and a differential part. By suitably combining the generators, we are able to subtract the differential part, and the superamplitude will be found to be proportional to the remaining multiplicative part. More explicitly, we can show that $\MmmP$ is proportional to the product of two fermionic delta functions
\ie
\MmmP\propto\prod_{A=1}^2\delta^A
\;,\fe
where $\delta^A\equiv\sum_{i=1}^2\LRA{P\bi^I}\eta_{iI}^A$.

Let's contract \eqref{eqn:QA and Q+A} with $\lambda_P^\alpha$ and $\tilde{\lambda}_{P\dot{\alpha}}$ to get rid of $\eta_P$'s, and multiply them with some proper coefficients, we get
\ie
&\CosT\LRA{PQ^A}\MmmP=0&\Rightarrow\sum_{i=1}^2\LRA{P\bi^I}\left(\CosTsqr\eta_{iI}^A+\SinT\CosT\sigma_i\frac{\partial}{\partial\eta_{iA}^I}\right)\MmmP=0
\;,\\
&{1\over x} \epsilon^{AB}\SinT\LRB{P\tQ_B}\MmmP=0&\Rightarrow\sum_{i=1}^2\LRA{P\bi^I}\left( \SinT\CosT\sigma_i\frac{\partial}{\partial\eta_{iA}^{I}}+\SinTsqr\eta_{iI}^A  \right)\MmmP=0
\;.\fe
By subtracting the two equations, we can see that if $\CosTsqr-\SinTsqr\neq0$, then
\ie
\sum_{i=1}^2\LRA{P\bi^I}\eta_{iI}^A\MmmP=0
\;.\fe
This result implies that $\MmmP$ is proportional to the delta function 
\ie
\label{eqn: the delta function}
\delta^{2}(\eta)=\prod_{A=1}^2\sum_{i=1}^2\LRA{P\bi^I}\eta_{iI}^A
\;.\fe
We then write $\MmmP$ as
\begin{tcolorbox}
\ie
\label{eqn: A=delta B}
\MmmP=\left[\prod_{A=1}^2\sum_{i=1}^2\LRA{P\bi^I}\eta_{iI}^A\right]\MmmPr
\;.\fe
\end{tcolorbox}

We now want to see what constraints SUSY places on $\MmmPr$. Since the calculations of $\MmmPr$ later in this section will be on the support of the delta function $\delta^{2}(\eta)$, we introduce a notation "$\eqd$", and denote $A$ is equal to $B$ on the support of the delta function as $A\eqd B$. In other words,
\ie
A\eqd B\;\Leftrightarrow\; \delta^{(2)}A=\delta^{(2)}B
\;.\fe
To see what constraints SUSY places on $\MmmPr$, first observe that when one imposes momentum conservation $p_1+p_2+P=0$, the following two commutators vanishes
\ie
\left[Q^A_{\alpha},\prod_{A=1}^2\sum_{i=1}^2\LRA{P\bi^I}\eta_{iI}^A\right]&=0 \text{, } \left[\tQ_{i\dot{\alpha}A},\prod_{A=1}^2\sum_{i=1}^2\LRA{P\bi^I}\eta_{iI}^A\right]&=0
\;.\fe
Therefore, according to \eqref{eqn:QA and Q+A}, $\MmmPr$ should satisfy
\ie
&\left[\prod_{A=1}^2\sum_{i=1}^2\LRA{P\bi^I}\eta_{iI}^A\right]Q^A_{\alpha}\MmmPr=0 \text{ ; } \left[\prod_{A=1}^2\sum_{i=1}^2\LRA{P\bi^I}\eta_{iI}^A\right]\tQ_{i\dot{\alpha}A}\MmmPr=0
\;.\fe
\noindent
Let's introduce two reference spinors $\LA{\zeta}$ and $\LB{\xi}$, which are not parallel to $\LA{P}$ and $\LB{P},$\footnote{In the rest of this paper, $\LA{\zeta}$ and $\LB{\xi}$ stand for reference spinors, and they are not parallel to $\LA{P}$ and $\LB{P}$.}
\ie
\LRA{\zeta P}\ne0\;;\;\LRB{\xi P}\ne0
\;.\fe
To simplify notations, let's further define
\ie
\label{eqn: definition of D and D+}
&\begin{cases}
\DaRaise{A}\equiv\sum_{i=1}^2 \LRA{\zeta \bi^I} \CosT\eta_{iI}^A
\\
\DaLower{A}\equiv\sum_{i=1}^2 \LRA{\zeta \bi^I} \SinT\sigma_i\frac{\partial}{\partial\eta_{iA}^I}
\end{cases}
\Rightarrow \Da{A}=\DaRaise{A}+\DaLower{A}
\\
&\begin{cases}
\DbRaise{A}\equiv\sum_{i=1}^2 \LRB{\xi \bi^I} \left(-\SinT\right)\sigma_i\eta_{iIA}
\\
\DbLower{A}\equiv\sum_{i=1}^2 \LRB{\xi \bi^I}  \CosT\frac{\partial}{\partial\eta_{i}^{IA}}
\end{cases}
\Rightarrow \Db{A}=\DbRaise{A}+\DbLower{A}
\;,\fe
and we have
\begin{tcolorbox}
\ie
\label{eqn:constraints on B}
&\Da{A}\MmmPr + \LRA{\zeta P}\eta_P^A\MmmPr \eqd0\\
&\Db{A}\MmmPr + \LRB{\xi P}\frac{\p}{\p\eta_P^A}\MmmPr \eqd0
\;.\fe
\end{tcolorbox}
\noindent
The plus and minus signs in the subscripts are indicating that they raise/lower the $\eta_i$-order\footnote{In this paper, the order of $\eta_i$ means the order of the sum of $\eta_1$ and $\eta_2$. For instance, $\eta_{1I}^A\eta_{2J}^B$ has order 2, $\eta_{1I}^A\eta_{2J}^B\eta_{1KA}$ has order 3. Note that $\eta_{1I}^A\eta_{2J}^B\eta_P^C$ has order 2, since $\eta_P^A$ doesn't increase the order of $\eta_i$.}.

\subsection{$\Ha$ and $\Hb$}
On the delta function, we have (simply because they are proportional to $\delta^A$)
\ie
\LA{P} \left[ \sum_{i=1}^2\RA{\bi^I}\eta_{iI}^A \right]\eqd0\;\;;\;\;\LB{P} \left[ \sum_{i=1}^2\RB{\bi^I}\sigma_i\eta_{iI}^A \right]\eqd0
\;,\fe
therefore,
\ie
\left[ \sum_{i=1}^2\RA{\bi^I}\eta_{iI}^A \right]\propto\RA{P} \;\;;\;\;\left[ \sum_{i=1}^2\RB{\bi^I}\sigma_i\eta_{iI}^A \right]\propto\RB{P} 
\;.\fe
Let's define the proportionality coefficient to be $\Ha$ and $\Hb$
\begin{tcolorbox}
\ie
\label{eqn: expressing via reference spinors}
\Ha^A &\equiv  \frac{1}{\LRA{\zeta P}} \sum_{i=1}^2\LRA{\zeta \bi^I}\eta_{iI}^A\\
\Hb^A &\equiv  \frac{1}{\LRB{\xi P}} \sum_{i=1}^2\LRB{\xi \bi^I}\sigma_i\eta_{iI}^A
\;.\fe
\end{tcolorbox}
\noindent
$\Ha$ and $\Hb$ are independent to those reference spinors when evaluated on the support of the delta function. For more details about $\Ha$ and $\Hb$, see \ref{appendix: More on Ha and Hb}.

In \ref{appendix: amplitude = f(Ha,Hb)}, we show that $\MmmPr$ is a function of $\Ha$ and $\Hb$ only. More precisely, all $\eta_i$ ($\eta_i=\;\eta_1\text{ or }\eta_2$) in $\MmmPr$ must be of the form
\ie
\sum_{i=1}^2\lambda_i ^I\eta_{iI}^A=\lambda_P\Ha^A  \;\;\text{or}\;\;  \sum_{i=1}^2\tilde\lambda_i ^I\sigma_i\eta_{iI}^A=\tilde\lambda_P\Hb^A
\;.\fe
$\delta^A$, $\Ha^A$, $\Hb^A$ play central roles in obtaining the superamplitude. In fact, if a superamplitude satisfies \eqref{eqn:QA and Q+A}, then all its $\eta_i$'s (massive Grassmann variables) will appear in the form of $\delta^A$, $\Ha^A$, $\Hb^A$ (proven in \ref{appendix: amplitude = f(Ha,Hb)}). Therefore, $\delta^A$, $\Ha^A$, $\Hb^A$ serve as "building blocks" of the superamplitude. To simplify notations, we also define
\ie
\HHa \equiv \frac{1}{2}\Ha^A\Ha_A \;\;;\;\; \HHb \equiv \frac{1}{2}\Hb^A\Hb_A \;\;;\;\; \HHab\equiv\frac{1}{2}\Ha^A\Hb_A
\;.\fe

\paragraph{Raising and lowering $\eta$ order}
$ $\newline
For an arbitrary quantity $X$, we can raise its $\eta_i$-order by acting $\mathcal{D}_{+\alpha}^A$ and $\tilde{\mathcal{D}}_{+\dot{\alpha} A}$ (see \eqref{eqn: definition of D and D+} for their definitions)
\ie
\label{eqn: raising eta order}
\DaRaise{A} X&=\LRA{\zeta P}\cdot \CosT\Ha^A X
\\
\DbRaise{A} X&=\LRB{\zeta P}\cdot\left(-\SinT\right)\Hb_A X
\;.\fe
\\
On the other hand, if a quantity meets $\mathcal{D}_{-\alpha}^A$ or $\tilde{\mathcal{D}}_{-\dot{\alpha} A}$, the $\eta_i$-order is lowered,
\ie
\begin{cases}
\label{eqn: lowering eta order}
\DaLower{A} \Ha ^B=0\\
\DbLower{A} \Ha ^B=\LRB{\zeta P}\cdot(-\CosT)\delta^B_A\\
\DaLower{A} \Hb _B=\LRA{\zeta P}\cdot \SinT\delta^A_B\\
\DbLower{A} \Hb _B=0
\end{cases}
\;.\fe

%%%%%%%%%%%%%%%%%%%%%%%%%%%%%%%%%%%%%%%%%%%%%%%%%%%%%%%%%%%%%%%

\section{Solutions to $\Math{N}=2$}
\label{section: N=2}
In this section, we will consider $\Math{N}=2$ super symmetry, and look for the explicit solution of $\AAA$ introduced in \eqref{M and A} (which is LG neutral and SUSY invariant), using the building blocks introduced in section \ref{section: The generators and the building blocks}. We will also consider spinning multiplets, and show that they can always be written in the form of \eqref{M and A}.

The massless super field with vacuum of helicity $h$ in $\Math{N}=2$ SUSY is
\ie
{\Psi^{+h}}=\field{h}+\eta^A\field{h-\tfrac{1}{2}}_A+\eta^A\eta_A\field{h-1}
\;.\fe
On the other hand, the scalar massive multiplet in $\Math{N}=2$ SUSY is \cite{Cachazo:2018hqa}
\ie
\bold{\Phi}=\phi+\eta_I^A\psi_A^I+\frac{1}{2}\eta_I^A\eta_{JA} W^{(IJ)}+\frac{1}{2}\eta_I^A\eta^{IB} \phi_{(AB)} +\frac{1}{3}\eta_I^A\eta_{JA}\eta^{JB}\bar\psi_B^I+\frac{1}{12}\eta^A_I\eta^{IB}\eta_{JA}\eta_B^J\bar\phi
\;.\fe
Note that in the above equations, $I,J$ stand for LG indices, while $A,B$ stand for R-charge indices. From the above expansion, we can see that there are 5 spin-$0$ components (1 $\phi$, 1 $\bar{\phi}$, and 3 $\phi_{(AB)}$), 4 spin-$1\over2$ components (2 $\psi_A^I$, and 2 $\bar{\psi}_A^I$), and 1 spin-$1$ component ($W^{(IJ)}$). Since a spin-$s$ particles carries $(2s+1)$ degree of freedom, our massive multiplet has 8 bosonic d.o.f and 8 fermionic d.o.f.

\subsection{The SUSY invariant quantity $\AAA$}
$\AAA$ is the quantity that satisfies
\begin{enumerate}
\item invariant under all LG transformations of the external particles
\item invariant under SUSY
\end{enumerate}
Since $\AAA$ is annihilated by the super charges, all the discussions about $\MmmP$ in section \ref{section: The generators and the building blocks} applies to $\AAA$ as well. According to \eqref{eqn: A=delta B},
\ie
\AAA=\delta^1\delta^2\AAAr
\;,\fe
where $\AAAr$ is a function of $\Ha$ and $\Hb$ (just like $\MmmPr$ is a function of $\Ha$ and $\Hb$). Let's expand $\AAAr$ in $\eta_P$
\ie
\label{eqn: B, f, g, h}
\AAAr\equiv  \Af+\eta_P^A \Ag_A+{1\over2}\eta_P^A\eta_{PA} \Ah
\;,\fe
where $\Af$, $\Ag$, $\Ah$ are all functions of $\Ha$ and $\Hb$. According to \eqref{eqn:constraints on B}, $\Af$, $\Ag$, and $\Ah$ should satisfy
\ie
\label{eqn: true constraints on f, g, h}
\begin{cases}
\mathcal{D}_\alpha^A \Ag_B-\delta^A_B\lambda_{P\alpha}\Af \eqd 0\\
\mathcal{D}_\alpha^A \Ah-\epsilon^{AB}\lambda_{P\alpha}\Ag_B \eqd 0\\
\tilde{\mathcal{D}}_{\dot{\alpha} A}\Ah \eqd 0
\end{cases}
\;.\fe
To get the full superamplitude, let's first solve $\Ah$, which should satisfy $\tilde{\mathcal{D}}_{\dot{\alpha} A}\Ah=0$. We can do this by first writing down all possible combinations of $\Ha$ and $\Hb$ at each $\eta_i$ order, and fix their coefficients by demanding that $\Ah$ satisfies $\tilde{\mathcal{D}}_{\dot{\alpha} A}\Ah=0$. After solving $\Ah$, we can obtain $\Ag_A$ and $\Af$ by resorting to \eqref{eqn: true constraints on f, g, h}.

Since the R-charge indices should be fully contracted, the $\eta_i$ orders of the terms in $\Ah$ should be even numbers. Otherwise, there will be at least one $\eta_i$ that cannot find a partner to contract with. At each even orders of $\eta_i^{(n)}$, the possible terms are shown, respectively,
\ie
\eta^{(0)}:\;& 1\\
\eta^{(2)}:\;& \HHa, \HHb, \HHab \\
\eta^{(4)}:\;& \HHa\HHb
\;,\fe
and the most general form of $\Ah$ is
\ie
\label{eqn: general form of h}
\Ah=c_0+c_1\HHa+c_2\HHab+c_3\HHb+c_4\HHa\HHb
\;.\fe
Demanding $\tilde{\mathcal{D}}_{\dot{\alpha} A}\Ah=0$, we get
\ie
\DbRaise{A}\Ah^{(0)}=\DbLower{A}\Ah^{(2)}\;;\;\DbRaise{A}\Ah^{(2)}=\DbLower{A}\Ah^{(4)}
\;,\fe
and resorting to \eqref{eqn: raising eta order} and \eqref{eqn: lowering eta order}, we arrive at the relations between $c_i$'s:
\ie
\label{eqn: relations of c_i}
c_2=-\frac{2\SinT}{\CosT}c_0,\;c_4=\frac{\SinTsqr}{\CosTsqr},\;c_1=0
\;.\fe
This implies there are two solutions, since according to \eqref{eqn: relations of c_i}, the general form \eqref{eqn: general form of h} is decoupled into two linearly independent terms (the result should not be surprising, since there are also two solutions in the massless case),
\ie
\label{eqn: two solutions of h}
\Ah^{(1)}&=1-{2\SinT\over \CosT}\HHab+{\SinTsqr\over \CosTsqr}\HHa \HHb \\
\Ah^{(2)}&=\HHb
\;.\fe
Substitute \eqref{eqn: two solutions of h} into $\mathcal{D}_\alpha^A \Ah-\epsilon^{AC}\lambda_{P\alpha}\Ag_C=0$, we obtain the $\Ag_A$ part of each solution
\ie
\label{eqn: two solutions of g_A}
\Ag^{(1)}_{A}&={1\over \CosT}\Ha_A+\frac{\SinT}{\CosTsqr}\HHa\Hb_A\\
\Ag^{(2)}_{A}&=\CosT\Ha_A\HHb+\SinT\Hb_A
\;.\fe
Last, substitute \eqref{eqn: two solutions of g_A} into $\mathcal{D}_\alpha^A g_C-\delta^A_C\lambda_{P\alpha}\Af=0$, we obtain the $\Af$ part of each solution
\ie
\Af^{(1)}&={1\over \CosTsqr}\HHa\\
\Af^{(2)}&=\SinTsqr+2\CosT\SinT\HHab+\CosTsqr\HHa\HHb
\;.\fe
Our final solutions are
\ie
\label{eqn: final solution (a)}
S^{(a)}(\tZ)=C^{(a)}\cdot \delta^1\delta^2\Big\{\Af^{(a)}+\eta_P^A\Ag^{(a)}_A+\frac{1}{2}\eta_P^A\eta_{PA}\Ah^{(a)}\Big\}
\;,\fe
where $a=1,2$, labeling the two solutions, and $C^{(1)}=\CosTsqr$, $C^{(2)}=-x$, so as to make both solutions LG neutral \footnote{One may wonder why adding $x$-factors is the only way to change the helicity of vacuum state of the massless multiplet. There are after all several quantities that carries helicity, including $\lambda_{P\alpha}$ and $\tilde\lambda_{P\dot\alpha}$, and of course the $x$-factor. However, both $\lambda_{P\alpha}$ and $\tilde\lambda_{P\dot\alpha}$ carry an extra $SL(2,C)$ index, i.e., $\alpha$ or $\dot\alpha$, and should be contracted with $\lambda_{i\alpha}^I$ or $\tilde\lambda_{i\dot\alpha}^I$ ($i=1,2$). But all $\lambda_{i\alpha}^I$ or $\tilde\lambda_{i\dot\alpha}^I$ dependence are encoded in $S^{(1)}$ and $S^{(2)}$, and their $SL(2,C)$ indices are already contracted. As a result, $x$-factors is the only choice remains.} (and also make $S^{(1)}$ and $S^{(2)}$ look more "symmetrical"). The solutions depend on $\CosT$ and $\SinT$, which are functions of $Z$.

Up to this point, we have two solutions to $\AAA$. However, they are written in terms of $\Ha^A$ and $\Hb_A$, which are not manifestly Lorentz covariant quantities themselves, see \eqref{eqn: expressing via reference spinors}. Only when they are combined with the delta functions $\delta^A$ can they be represented in a Lorentz covariant form. Let's first define\footnote{Note that $Q_{\alpha}^A$ and $\tQ_{\dot{\alpha}A}$ (normal character Q) stand for generators, while $\Math{Q}_{\alpha}^A$ and $\tilde{\Math{Q}}_{\dot{\alpha}}^A$ (curly character Q) stand for the little group covariant supersymmetric components of the solutions.}
\ie
&\Math{Q}_{\alpha}^A\equiv\lambda_{1\alpha}^I\eta_{1I}^A+\lambda_{2\alpha}^I\eta_{2I}^A\;\;;\;\;
\tilde{\Math{Q}}_{\dot{\alpha}}^A\equiv\tilde{\lambda}_{1\dot{\alpha}}^I\eta_{1I}^A-\tilde{\lambda}_{2\dot{\alpha}}^I\eta_{2I}^A\\
&\Math{Q}_{P\alpha}^A\equiv\lambda_{P\alpha}\eta_{P}^A\;\;;\;\;
\tilde{\Math{Q}}_{P\dot{\alpha}}^A\equiv\tilde{\lambda}_{P\dot{\alpha}}\eta_{P}^A\\
&\eta_i^A\cdot\eta_i^B\equiv-\frac{1}{2}\epsilon^{IJ}\eta_{iI}^A\eta_{iJ}^B\;\;;\;\;\eta_P\cdot\eta_P\equiv\frac{1}{2}\eta_P^A\eta_{PA}
\;.\fe
By combining $\Af^{(a)}$, $\Ag^{(a)}_A$, and $\Ah^{(a)}$ with the delta functions, we can finally write down the three point amplitudes in a Lorentz covariant and R-charge symmetric form. (For more details, see \ref{appendix: More on Ha and Hb}).\\
The first solution is
\ie
\label{S1: the first solution}
S^{(1)}(\tZ)=&\CosTsqr\cdot\delta^1\delta^2\Big\{ {1\over 2}\eta_P^A\eta_{PA} \Ah^{(1)} + \eta_P^A \Ag^{(1)}_{A} + \Af^{(1)} \Big\}\\
=&\frac{\CosTsqr}{4}\LRA{\Math{Q}^{A} \Math{Q}_P^{B}}\LRA{\Math{Q}_{A} \Math{Q}_{PB}}
+\frac{\CosT \SinT}{6x}\LRA{\Math{Q}^{A} \Math{Q}^{B}}\LRB{\tilde{\Math{Q}}_{A} \tilde{\Math{Q}}_{B}} \left( \eta_P\cdot\eta_{P} \right)
\\
&+\frac{\SinTsqr}{12x^2} \LRB{\tilde{\Math{Q}}^{A} \tilde{\Math{Q}}_{P}^{B}} \LRB{\tilde{\Math{Q}}_{A} \tilde{\Math{Q}}_{PB}} \left(\ETAsU{C}{D} \right)\left(\ETAsL{C}{D} \right)
\\
&+\frac{\CosT}{3}\LRA{\Math{Q}^{A} \Math{Q}^{B}}\LRA{\Math{Q}_{A} \Math{Q}_{PB}}
-\frac{2\SinT}{9x}\LRA{\Math{Q}^{A} \Math{Q}^{B}}  \LRB{\tilde{\Math{Q}}_{A} \tilde{\Math{Q}}_{P}^{C}}\left(\ETAsL{B}{C}\right)
\\
&+\frac{1}{12}\LRA{\Math{Q}^{A} \Math{Q}^{B}}\LRA{\Math{Q}_{A} \Math{Q}_{B}}
\;,\fe
and the second solution is
\ie
\label{S2: the second solution}
S^{(2)}(\tZ)=&-x\cdot\delta^1\delta^2\Big\{ {1\over 2}\eta_P^A\eta_{PA} \Ah^{(2)} + \eta_P^A \Ag^{(2)}_{A} + \Af^{(2)} \Big\}
\\
=&\frac{1}{12x}\LRB{\tilde{\Math{Q}}^{A} \tilde{\Math{Q}}^{B} }\LRB{\tilde{\Math{Q}}_A\tilde{\Math{Q}}_B}\left( \eta_P\cdot\eta_{P} \right)
-\frac{2\CosT}{9x}\LRB{\tilde{\Math{Q}}^A \tilde{\Math{Q}}^{B}}   \LRB{\tilde{\Math{Q}}_A\tilde{\Math{Q}}_P^{C}} \left( \ETAsL{B}{C} \right)
\\
&+\frac{\SinT}{3} \LRB{\tilde{\Math{Q}}^{A}\tilde{\Math{Q}}^{B}}\LRA{\Math{Q}_A\Math{Q}_{PB}}
+\frac{\SinTsqr x}{2}\LRA{\Math{Q}^AP}\LRA{\Math{Q}_AP}
+\frac{\CosT\SinT}{6}\LRA{\Math{Q}^{A}\Math{Q}^{B}}\LRB{\tilde{\Math{Q}}_A\tilde{\Math{Q}}_B}
\\
&+\frac{\CosTsqr}{6x} \LRB{\tilde{\Math{Q}}^AP} \LRB{\tilde{\Math{Q}}_AP} \left( \ETAsU{B}{C} \right) \left( \ETAsL{B}{C} \right)
\;.\fe
Note that both of the solutions are inhomogeneous in Grassmann degree. The first term in $S^{(1)}(\tZ)$ has Grassmann degree 4, followed by terms with Grassmann degree 6,8,4,6,4. Similar thing happens to $S^{(2)}(\tZ)$.

The two solutions are related. If we do Grassmann Fourier transform (defined in~\cite{Herderschee:2019ofc}) to either of the solutions, and change all angle brackets to square brackets (and vice versa), we get the other solution (up to an overall constant). This is in fact a direct consequence of the form of the generators \eqref{eqn: QQ+ full}. If we change all $\eta$'s into $\frac{\partial}{\partial\eta}$'s (and vice versa) in $Q_{\alpha}$, and change all $\lambda$'s into $\tilde{\lambda}$'s, we get exactly $\tQ_{\dot{\alpha}}$. We can do the same transformation to $\tQ_{\dot{\alpha}}$ to get $Q_{\alpha}$. Since the $\AAA$ is invariant under transformation generated by both $Q_{\alpha}$ and $\tQ_{\dot{\alpha}}$, if the generators are symmetric under certain transformation, the solutions would inherit this property, and come in pairs consequently. The symmetry of the generators is the origin of why the solutions must come in pairs.

Observe that the solution \eqref{S1: the first solution} can be factorized into a product of two components
\ie
\label{eqn: S1 factorized pieces}
S^{(1)}(\tZ)=&\left( \CosT\eta_P^1\delta^1-\delta^1\Ha^1-\CosT\SinT\delta^1\Hb_2\eta_P^1\eta_P^2+\SinT\delta^1\Ha^1\Hb_2\eta_P^2 \right)\\
\cdot&\left( \CosT\eta_P^2\delta^2-\delta^2\Ha^2+\CosT\SinT\delta^2\Hb_1\eta_P^1\eta_P^2+\SinT\delta^2\Ha^2\Hb_1\eta_P^1 \right)
\;.\fe
The solution \eqref{S2: the second solution} can be factorized in a similar expression
\ie
\label{eqn: S2 factorized pieces}
S^{(2)}(\tZ)=-x\cdot
&\left( \CosT\delta^1\Ha^1\Hb_1+\delta^1\eta_P^1\Hb_1+\SinT\delta^1 \right)\\ \cdot
&\left( \CosT\delta^2\Ha^2\Hb_2+\delta^2\eta_P^2\Hb_2+\SinT\delta^2 \right)
\;.\fe
This observation will later play a crucial role in the discussion of BPS limit.

To sum up, $\AAA$ is the linear combination of the two solutions
\begin{tcolorbox}
\ie
\label{eqn: linear combination A}
\AAA(Z,m)=\alpha_1\cdot S^{(1)}(\tZ)+\alpha_2\cdot S^{(2)}(\tZ)
\;.\fe
\end{tcolorbox}
Note that all $\eta$ dependencies are encoded in $S^{(1)}$ and $S^{(2)}$. There are two unfixed coefficients $\alpha_1$ and $\alpha_2$, which can be fixed by requiring parity symmetry. We will go back to this later.

\subsection{Spinning multiplets}
\label{subsection: spinning multiplets}
A 3-pt amplitude of two equal massive particles with mass $m$ and a massless particle with helicity $h$ can be written in spinor helicity basis, see~\cite{Arkani-Hamed:2017jhn},
\ie
M^{(I_1\cdots I_{2s_1})(J_1\cdots J_{2s_2})h}=
\left(\lambda_1\right)^{I_1}_{\alpha_1}\cdots\left(\lambda_1\right)^{I_{s_1}}_{\alpha_{s_1}}
\left(\lambda_2\right)^{I_2}_{\beta_2}\cdots\left(\lambda_2\right)^{I_{s_2}}_{\beta_{s_2}}
M^{(\alpha_1\cdots \alpha_{2s_1})(\beta_1\cdots \beta_{2s_2})h}
\;,\fe
where
\ie
M^{(\alpha_1\cdots \alpha_{2s_1})(\beta_1\cdots \beta_{2s_2})h}
=\sum_{i=|s_1-s_2|}^{(s_1+s_2)} g_i x^h \left[ \lambda_P^i \left( \frac{p_1\tilde{\lambda}_P}{m} \right)^i \epsilon^{s_1+s_2-i} \right]^{(\alpha_1\alpha_2\cdots\alpha_{2s_1}),(\beta_1\beta_2\cdots\beta_{2s_2})}
\;.\fe
The amplitude for spinning multiplets can always be factorized into a product of a bosonic part $\Msasb{s_1}{s_2}{h}$ and $\AAA$,
\begin{tcolorbox}
\ie
\label{eqn: spinning multiplet the most general form}
&\sMsasb{s_1}{s_2}{h}\\
&\;=\Msasb{s_1}{s_2}{h}\cdot\left[\alpha_1\cdot S^{(1)}(\tZ)+\alpha_2\cdot S^{(2)}(\tZ)\right]
\;.\fe
\end{tcolorbox}
\noindent
In other words, if a spinor helicity variable that carries LG indices of the massive multiplets (e.g., $I_1\cdots I_{2s_1}$) have their $SL(2,C)$ indices (e.g., $\alpha$ in $\lambda_\alpha^I$) contract with a spinor helicity variable whose LG indices contract with a Grassmann variable $\eta_{iI}^A$ (we say the amplitude is "polluted"), then we can always use Schouten identity to reorder the $SL(2,C)$ indices to make the amplitude "unpolluted". More explicitly, we can always write the amplitude without terms such as $\LRA{\bo^{I_1}\bo^I}\eta_{1I}^A$. The reason is if such terms do exist, then according to the discussions in \ref{appendix: amplitude = f(Ha,Hb)}, they must come in combinations $\lambda_{1\alpha}^I\eta_{1I}^A+\lambda_{2\alpha}^I\eta_{2I}^A$ or $\tilde\lambda_{1\dot\alpha}^I\eta_{1I}^A-\tilde\lambda_{2\dot\alpha}^I\eta_{2I}^A$, and can be re-written in terms of $\Ha$ or $\Hb$. For example, if $\LRA{\bo^{I_1}\bo^I}\eta_{1I}^A$ do exist, then it must appear in the form of $\sum_{i}\LRA{\bo^{I_1}\bi^I}\eta_{iI}^A$; but according to \eqref{eqn: expressing via reference spinors}, $\sum_{i}\LRA{\bo^{I_1}\bi^I}\eta_{iI}^A = \LRA{\bo^{I_1} P}\Ha^A$, and thus make the term "unpolluted".

\paragraph{Parity}
$ $\newline
Parity symmetry relates each amplitude and its conjugate amplitude, demanding they have the same couplings, and that constraints the coefficients $\alpha_1$ and $\alpha_2$ in \eqref{eqn: spinning multiplet the most general form}. In $\Math{N}=2$ supersymmetry, the massless multiplets with helicity $+h$ and $-h+1$ are
\ie
&{\Pmulti{+h}}=\field{h}+\eta^A\field{h-\tfrac{1}{2}}_A+\eta^A\eta_A\field{h-1}\\
&{\Pmulti{-h+1}}=\field{-h+1}+\eta^A\field{-h+\tfrac{1}{2}}_A+\eta^A\eta_A\field{-h}
\;.\fe
We can see the components are related by parity transformation, and parity invariance requires the coefficients of the superamplitudes with $(+h)$-helicity and $(-h+1)$-helicity multiplet should be related. Let's first write down the form of the superamplitudes
\ie
&\sMsasb{s}{s}{h}=\Msasb{s}{s}{h}\left[\alpha_1\cdot S^{(1)}+\alpha_2\cdot S^{(2)}\right]\\
&\sMsasb{s}{s}{-h+1}=\Msasb{s}{s}{-h+1}\left[\beta_1\cdot S^{(1)}+\beta_2\cdot S^{(2)}\right]
\;,\fe
where
\ie
\Msasb{s}{s}{+2}=&g_0x^2\prod_{i,j=1}^{2s}\LRA{\bo^{(I_i)}\bt^{(J_j)}}+g_1\frac{x}{m}\prod_{i,j=1}^{2s-1}\LRA{\bo^{(I_i}\bt^{(J_j}}\LRA{\bo^{2s)}P}\LRA{\bt^{{2s)}}P}\\
+&g_2\frac{1}{m^2}\prod_{i,j=1}^{2s-2}\LRA{\bo^{(I_i}\bt^{(J_j}}\prod_{i,j=2s-1}^{2s}\LRA{\bo^{I_i)}P}\LRA{\bt^{ J_j)}P}+\cdots
\\
\barMsasb{s}{s}{0}=&\bar g_0\prod_{i,j=1}^{2s}\LRB{\bo^{(I_i)}\bt^{(J_j)}}+\bar g_1\frac{1}{xm}\prod_{i,j=1}^{2s-1}\LRB{\bo^{(I_i}\bt^{(J_j}}\LRB{\bo^{2s)}P}\LRB{\bt^{{2s)}}P}\\
+&\bar g_2\frac{1}{x^2m^2}\prod_{i,j=1}^{2s-2}\LRB{\bo^{(I_i}\bt^{(J_j}}\prod_{i,j=2s-1}^{2s}\LRB{\bo^{I_i)}P}\LRB{\bt^{ J_j)}P}+\cdots
\;.\fe
Parity invariance requires $(\beta_1,\beta_2)=(\alpha_2,\alpha_1)$ and $g_i=\bar g_i$.

%%%%%%%%%%%%%%%%%%%%%%%%%%%%%%%%%%%%%%%%%%%%%%%%%%%%%%%%%%%%%%%

\section{Limits of the amplitude}
\label{section: Z=0 and BPS and massless}

In this section, the $Z=0$ limit and the BPS limit of $\AAA$ are examined. The former is a product of two $\Math{N}=1$ SUSY invariant delta functions, and the later is a product of two components that are related to each other by Grassmann Fourier transformation.

\subsection{The $Z=0$ limit}
When $Z_1=Z_2=0$, $\tZ\rightarrow0$, and the solution takes the form (see \eqref{eqn: S1 factorized pieces})
\ie
\label{Z=0 limit MHV amplitude}
S^{(1)}=\prod_{A=1}^2\Bigg[\LRA{\bo^I\bt^J}\eta_{1I}^A\eta_{2J}^A+\sum_{i=1}^2\LRA{\bi^IP}\eta_{iI}^A\eta_P^A+\sum_{i=1}^2m\etas{i}{A}\Bigg]
\;,\fe
where we defined
\ie
\etas{i}{A}\equiv-{1\over2}\epsilon^{IJ}\eta_{iI}^A\eta_{iJ}^A \;\;\text{(no sum over $i$ and $A$)}
\;.\fe
This is exactly the square of $\Math{N}=1$ SUSY invariant delta function defined in~\cite{Herderschee:2019ofc}, which is not surprising, since we can see from \eqref{eqn: QQ+ full} that $Q_\alpha^A$ is purely multiplicative in $\eta$'s, while $\tQ_{\dot{\alpha}A}$ is purely differential in $\eta$'s, and there is no mixing terms between two R-charge indices. Therefore, the solution being multiplicative of two solutions corresponding to two R-charges is an expected result.
The other solution is
\ie
\label{Z=0 limit MHV_bar amplitude}
S^{(2)}=\prod_{A=1}^2\Bigg[&\LRB{\bo^I\bt^J}\eta_{1I}^A\eta_{2J}^A\eta_P^A+\LRB{\bo^IP}\eta_{1I}^A\etas{2}{A}+\LRB{\bt^IP}\eta_{2I}^A\etas{1}{A}\\
&+m\eta_P^A\etas{1}{A}+m\eta_P^A\etas{2}{A}\Bigg]
\;,\fe
which is related to \eqref{Z=0 limit MHV amplitude} by Grassmann Fourier transformation.

In the high energy limit, we can take $\eta_{i1}^A\rightarrow\eta_i^A$, $\eta_{i2}^A\rightarrow\tilde\eta_i^A$, and the above amplitudes have the limits\footnote{In the massless limit, the number of supercharges is reduced in half, resulting residual Grassmann variables $\tilde\eta_i$. Those residual Grassmann variables divide the massive superfield into different massless representations \cite{Herderschee:2019ofc}.} (see \cite{Arkani-Hamed:2017jhn,Herderschee:2019ofc})
\ie
&S^{(1)}\rightarrow\prod_{A=1}^2\Bigg[\LRA{12}\eta_{1}^A\eta_{2}^A+\sum_{i=1}^2\LRA{1P}\eta_{iI}^A\eta_P^A\Bigg]
\\
&S^{(2)}\rightarrow\prod_{A=1}^2\Bigg[\LRB{12}\eta_P^A+\LRB{P1}\eta_{2}^A+\LRB{2P}\eta_{1}^A\Bigg]\tilde\eta_1^A\tilde\eta_2^A.
\fe
We can see they are equal to the MHV and anti-MHV amplitudes in \cite{Elvang:2015rqa}.

\subsection{The BPS limit}

When the massive particles saturate the BPS limit, i.e., $Z_1=Z_2=2m$, $\tZ\rightarrow\pi/4$, and \eqref{eqn: QQ+ full} becomes
\ie
\label{eqn: BPS limit generators}
&Q_\alpha^A=\sum_{i=1}^2\lambda_{i\alpha}^I\left({1\over\sqrt{2}}\eta_{iI}^A+{1\over\sqrt{2}}\sigma_i\frac{\partial}{\partial\eta_{iA}^I}\right)+\lambda_{P\alpha}\eta_P^A
\\
&\tQ_{\dot{\alpha}A}=\sum_{i=1}^2\tilde{\lambda}_{i\dot{\alpha}}^I\left({1\over\sqrt{2}}\frac{\partial}{\partial\eta_{i}^{IA}}-{1\over\sqrt{2}}\sigma_i\eta_{iIA}\right)+\tilde{\lambda}_{P\dot{\alpha}}\frac{\partial}{\partial\eta_P^A}
\;.\fe
Although the anti-commutation relations \eqref{eqn: the anti-commutation relations- massive} still hold, \eqref{eqn: linear combination A} is no longer valid. To see this, note that $\AAA$ is proportional to the delta function \eqref{eqn: the delta function} if and only if $\CosTsqr-\SinTsqr=0$ (See subsection \ref{subsection:  the delta function}), and the BPS limit doesn't meet this criteria. As a result, our solutions \eqref{eqn: linear combination A} do not apply to the BPS limit. However, in the BPS limit, the $\eta_i$ degree of freedom is half the number of that in the non-BPS limit. Therefore, to obtain the amplitude of the BPS limit, we shouldn't have started with \eqref{eqn: QQ+ full}, but rather one of the following set (either of them is sufficed to serve as a set of generators in the BPS limit),
\ie
\label{eqn: BPS limit two sets of generators}
\begin{cases}
Q_\alpha^1=\sum_{i=1}^2\lambda_{i\alpha}^I\eta_{iI}+\lambda_{P\alpha}^I\eta_{P}^1\\
Q_\alpha^2=\sum_{i=1}^2\sigma_i\lambda_{i\alpha}^I\frac{\partial}{\partial\eta_{i}^{I}}+\lambda_{P\alpha}^I\eta_{P}^2\\
\tQ_{\dot{\alpha}1}=\sum_{i=1}^2\tilde{\lambda}_{i\alpha}^I\frac{\partial}{\partial\eta_{i}^{I}}+\tilde{\lambda}_{P\dot{\alpha}}^I\frac{\partial}{\partial\eta_{P}^1}\\
\tQ_{\dot{\alpha}2}=-\sum_{i=1}^2\sigma_i\tilde{\lambda}_{i\alpha}^I\eta_{iI}+\tilde{\lambda}_{P\dot{\alpha}}^I\frac{\partial}{\partial\eta_{P}^2}
\end{cases}
\begin{cases}
Q_\alpha^1=-\sum_{i=1}^2\sigma_i\lambda_{i\alpha}^I\frac{\partial}{\partial\eta_{i}^{I}}+\lambda_{P\alpha}^I\eta_{P}^1\\
Q_\alpha^2=\sum_{i=1}^2\lambda_{i\alpha}^I\eta_{iI}+\lambda_{P\alpha}^I\eta_{P}^2\\
\tQ_{\dot{\alpha}1}=\sum_{i=1}^2\sigma_i\tilde{\lambda}_{i\alpha}^I\eta_{iI}+\tilde{\lambda}_{P\dot{\alpha}}^I\frac{\partial}{\partial\eta_{P}^1}\\
\tQ_{\dot{\alpha}2}=\sum_{i=1}^2\tilde{\lambda}_{i\alpha}^I\frac{\partial}{\partial\eta_{i}^{I}}+\tilde{\lambda}_{P\dot{\alpha}}^I\frac{\partial}{\partial\eta_{P}^2}
\end{cases}
\;.\fe

Although \eqref{S1: the first solution} and \eqref{S2: the second solution} are not the solutions for BPS limit, interesting things happen if we take $\tZ=\pi/4$. In this limit, the two factorized components in \eqref{eqn: S1 factorized pieces} are related. If we do Grassmann Fourier transformation to one of the components, then complex conjugate spinor helicity variables ($\lambda\leftrightarrow\tilde\lambda$), and transform $\eta_P$ properly, it will become the other component,
\ie
&\left( \frac{1}{\sqrt{2}}\eta_P^2\delta^2-\delta^2\Ha^2+\frac{1}{2}\delta^2\Hb_1\eta_P^1\eta_P^2+\frac{1}{\sqrt{2}}\delta^2\Ha^2\Hb_1\eta_P^1 \right) \Bigg|_{\substack{\lambda\leftrightarrow\tilde\lambda;\;\eta\rightarrow\bar\eta\\ \bar\eta_P \rightarrow \sqrt{2}\bar\eta_P}}\\
&=\text{FT}\left[\left( \frac{1}{\sqrt{2}}\eta_P^1\delta^1-\delta^1\Ha^1-\frac{1}{2}\delta^1\Hb_2\eta_P^1\eta_P^2+\frac{1}{\sqrt{2}}\delta^1\Ha^1\Hb_2\eta_P^2 \right)\Bigg|_{\bar\eta_P \rightarrow \sqrt{2}\bar\eta_P}\right]
\;,\fe
where
\ie
\text{FT}\left[f(\eta)\right]\equiv\frac{1}{8} \left[ \prod_{i=1}^2\int d\eta_{iI}^1d\eta_{i}^{I1} e^{\eta_{iI}^2\eta_{i}^{I1}}\right]\left[ \int d\eta_{P}^Ad\eta_{PA} e^{\bar\eta_{P}^A\eta_{PA}}\right]f(\eta)
\;.\fe
By comparing with \eqref{eqn: fourier transform-massive}, we see that $\eta_{iI}^2$ act as the barred Grassmann number of $\eta_{iI}^1$, and serve as another $\eta$-basis. In other words, \eqref{S1: the first solution} has unnecessary Grassmann variables in the BPS limit. We can use either of the factorized components in \eqref{eqn: S1 factorized pieces} to describe a BPS amplitude.

Let's see how the components in \eqref{eqn: S1 factorized pieces} are related to BPS amplitudes. If we rename the $\eta_{iI}$ in the LHS of \eqref{eqn: BPS limit two sets of generators} as $\eta_{iI}^1$, and the $\eta_{iI}$ in the RHS as $\eta_{iI}^2$, then the SUSY invariant amplitudes are (the first of them is the solution of the LHS set, while the second is that of the RHS set)~\cite{Caron-Huot:2018ape},
\ie
\label{eqn: BPS solutions}
\Math{A}_{BPS}^1=\eta_P^1\delta^1-\delta^1\Ha^1-\delta^1\Hb_2\eta_P^1\eta_P^2+\delta^1\Ha^1\Hb_2\eta_P^2\;\;\\
\Math{A}_{BPS}^2=\eta_P^2\delta^2-\delta^2\Ha^2+\delta^2\Hb_1\eta_P^1\eta_P^2+\delta^2\Ha^2\Hb_1\eta_P^1
\;,\fe 
where the definition of $\delta^A$, $\Ha^A$, $\Hb_A$ are same as the previous ones\footnote{Note that in~\cite{Caron-Huot:2018ape}, the amplitude reads \ie \left( -\sum_{i,j} \LRA{\bi^I\bj^J}\eta_{iI}^1\eta_{jJ}^1-2\sum_{i}\LRA{\bi^IP}\eta_{iI}^1\eta_P \right) \cdot \exp\left( \frac{1}{\LRB{\zeta P}} \sum_{i} \sigma_i\LRB{\zeta \bi^I}\eta_{iI}^1\eta_P^2 \right) \;,\fe which, according to \eqref{eqn: expressing via reference spinors}, equals to \ie  \Math{A}_{BPS}^1=(\delta^1\Ha^1-\eta_P^1\delta^1)\cdot\exp\left( \Hb_2\eta_p^2 \right)=\eta_P^1\delta^1-\delta^1\Ha^1-\delta^1\Hb_2\eta_P^1\eta_P^2+\delta^1\Ha^1\Hb_2\eta_P^2 \;.\fe}. Note that the index $A$ in the previous non-BPS calculations stands for R-charge, while in \eqref{eqn: BPS solutions}, $A$ is a label, labeling the two solutions. More concretely, $\eta_{iI}^1$ and $\eta_{iI}^2$ are more like a conjugate pair in \eqref{eqn: BPS solutions}. Despite the meaning of $A$ in BPS limit departs radically from non-BPS case, the solutions in \eqref{eqn: BPS solutions} can be related to non-BPS solution. The product of the two solutions in \eqref{eqn: BPS solutions} is
\ie
\Math{A}_{BPS}^1\times\Math{A}_{BPS}^2=&
\eta_P^1\delta^1\eta_P^2\delta^2+2\delta^1\delta^2\eta_P^1\eta_P^2\HHab-\delta^1\delta^2\eta_P^1\eta_P^2\HHa\HHb\\
&+\delta^1\delta^2\eta_P^A\Ha_A+\delta^1\delta^2\HHa\eta_P^A\Hb_A+\delta^1\delta^2\HHa
\;.\fe
If we replace all $\eta_P^A$ replaced by $\frac{1}{\sqrt{2}}\eta_P^A$, it will be equal to $S^{(1)}(\tZ=\frac{\pi}{4})$,
\ie
\label{eqn: BPS and non-BPS}
\Math{A}_{BPS}^1\times\Math{A}_{BPS}^2 \Big|_{\eta_P^A \rightarrow \frac{1}{\sqrt{2}}\eta_P^A}=S^{(1)}(\tZ)
\;.\fe

One may wonder why there should be a $\eta_P\rightarrow\frac{1}{\sqrt{2}}\eta_P$ in \eqref{eqn: BPS and non-BPS}, and we can trace its origin from the generators \eqref{eqn: QQ+ full}. In the limit $\tZ=\frac{\pi}{4}$, if we sum the two sets of generators $Q_\alpha^A$ in \eqref{eqn: BPS limit two sets of generators}, and modify $\eta_P$ by $\frac{1}{\sqrt{2}}\eta_P$, it will be proportional to the generators $Q_\alpha^A$ in \eqref{eqn: QQ+ full}. This explains why there should be a $\frac{1}{\sqrt{2}}$ factor. One may argue that once we make $\eta_P\rightarrow\frac{1}{\sqrt{2}}\eta_P$, the differential part in $\tQ_{\dot{\alpha}}$ will acquire a $\sqrt{2}$ factor, so that the sums of $Q_{\dot{\alpha}}^A$ in \eqref{eqn: BPS limit two sets of generators} are not proportional to $Q_{\dot{\alpha}}^A$ in \eqref{eqn: QQ+ full}. But since non-BPS solution is the product of BPS solutions, when applying a differential operator, we would need to impose the product law of differentiation, which induces an extra factor of $2$, and $\frac{1}{\sqrt{2}}\times2=\sqrt{2}$ explains the $\sqrt{2}$ factor.

%%%%%%%%%%%%%%%%%%%%%%%%%%%%%%%%%%%%%%%%%%%%%%%%%%%%%%%%%%%%%%%

\section{$\Math{N}=4$ Super-Maxwell and Supergravity}
\label{section: N=4 super-Maxwell and SUGRA}

We have been working on $\Math{N}=2$ SUSY amplitudes in the previous sections, and now we consider $\Math{N}=4$ SUSY, where the central charge matrix can be put in the standard block-diagonal form, 
\ie
\label{eqn: N=4 central charge}
Z_{AB}=
\begin{bmatrix}
0&-Z_{12}&0&0\\
Z_{12}&0&0&0\\
0&0&0&-Z_{34}\\
0&0&Z_{34}&0
\end{bmatrix}
\;.\fe
In central charge free $\Math{N}=4$ SUSY, the R-symmetry group is $SU(4)$. However, central charge extension breaks the full $SU(4)$ group into $SU(2)\otimes SU(2)$ subgroup.

Since the $\Math{N}=1,2$ part and the $\Math{N}=3,4$ part do not mix with each other, the supersymmetric part of the amplitude is the square of $\AAA$, with $\tZ$'s properly modified (c.f. \eqref{eqn: spinning multiplet the most general form}),
\ie
\label{eqn: spinning multiplet the most general form, N=4}
&\sMZaZbsasb{s_1}{s_2}{h}=\\
&\;\;\Msasb{s_1}{s_2}{h}\cdot
\left[\alpha_1 S_{12}^{(1)}(\tZa)+\alpha_2 S_{12}^{(2)}(\tZa)\right]
\left[\alpha_3 S_{34}^{(1)}(\tZb)+\alpha_4 S_{34}^{(2)}(\tZb)\right]
\;,\fe
where
\ie
2\CosTa\SinTa=\frac{Z_{12}}{2m}\;\;;\;\;2\CosTb\SinTb=\frac{Z_{34}}{2m}
\;.\fe
The subscripts that $S_{AB}^{(1)}$ and $S_{AB}^{(2)}$ carries, i.e., $12$ or $34$, indicate which projected $SU(2)$ group they are describing.

\paragraph{Parity}
$ $\newline
As we discussed in subsection \ref{subsection: spinning multiplets}, the coefficients $\alpha_n$ of different superamplitudes are related in a parity invariant theory. In $\Math{N}=4$ SUSY, similar things happen. First note that the massless multiplet with helicity-$(+h)$ and helicity-$(-h+2)$ are related by parity
\ie
\label{eqn: spectrum}
{\Pmulti{+h}}
=&\field{+h}
+\eta^{A^\prime}\field{h-\tfrac{1}{2}}_{A^\prime}
+\eta^{A^{\prime\prime}}\field{h-\tfrac{1}{2}}_{A^{\prime\prime}}\\
+&\eta^{A^\prime}\eta_{A^{\prime}} \field{h-1}_{12}
+\eta^{A^{\prime\prime}}\eta_{A^{\prime\prime}} \field{h-1}_{34}
+\eta^{A^\prime}\eta^{A^{\prime\prime}} \field{h-1}_{A^\prime A^{\prime\prime}}\\
+&\eta^{A^{\prime\prime}}\eta_{A^{\prime\prime}}\eta^{A^{\prime}}\field{h-\tfrac{3}{2}}_{A^{\prime}}
+\eta^{A^\prime}\eta_{A^\prime}\eta^{A^{\prime\prime}}\field{h-\tfrac{3}{2}}_{A^{\prime\prime}}
+\eta^{A^{\prime}}\eta_{A^{\prime}}\eta^{A^{\prime\prime}}\eta_{A^{\prime\prime}}\field{h-2}
\\
{\Pmulti{-h+2}}
=&\field{-h+2}
+\eta^{A^\prime}\field{-h+\tfrac{3}{2}}_{A^\prime}
+\eta^{A^{\prime\prime}}\field{-h+\tfrac{3}{2}}_{A^{\prime\prime}}\\
+&\eta^{A^\prime}\eta_{A^{\prime}} \field{-h+1}_{12}
+\eta^{A^{\prime\prime}}\eta_{A^{\prime\prime}} \field{-h+1}_{34}
+\eta^{A^\prime}\eta^{A^{\prime\prime}} \field{-h+1}_{A^\prime A^{\prime\prime}}\\
+&\eta^{A^{\prime\prime}}\eta_{A^{\prime\prime}}\eta^{A^{\prime}}\field{-h+\tfrac{1}{2}}_{A^{\prime}}
+\eta^{A^\prime}\eta_{A^\prime}\eta^{A^{\prime\prime}}\field{-h+\tfrac{1}{2}}_{A^{\prime\prime}}
+\eta^{A^{\prime}}\eta_{A^{\prime}}\eta^{A^{\prime\prime}}\eta_{A^{\prime\prime}}\field{-h}
\;,\fe
where $A^\prime=1,2$, $A^{\prime\prime}=3,4$. Therefore, the amplitude with $(+h)$-helicity multiplet and $(-h+2)$-helicity would have their coefficients $\alpha_n$ related. More concretely, if
\ie
\label{eqn: N=4, massive scalar multiplets, different helicities}
&\sMZaZbsasb{s}{s}{h}\\
&\;\;\;\;\;=\Msasb{s}{s}{h}\left[\alpha_1\cdot S_{12}^{(1)}+\alpha_2\cdot S_{12}^{(2)}\right]\left[\alpha_3\cdot S_{34}^{(1)}+\alpha_4\cdot S_{34}^{(2)}\right]\\
&\sMZaZbsasb{s}{s}{-h+2}\\
&\;\;\;\;\;=\barMsasb{s}{s}{-h+2}\left[\beta_1\cdot S_{12}^{(1)}+\beta_2\cdot S_{12}^{(2)}\right]\left[\beta_3\cdot S_{34}^{(1)}+\beta_4\cdot S_{34}^{(2)}\right]
\;,\fe
then parity invariance requires $(\beta_1,\beta_2,\beta_3,\beta_4)=(\alpha_2,\alpha_1,\alpha_4,\alpha_3)$.

If the superamplitude is self-conjugate, i.e., the massless multiplet is a spin-$1$ multiplet, parity invariance will require $\alpha_1=\alpha_2$, $\alpha_3=\alpha_4$, and therefore
\ie
\label{eqn: self-conjugate alpha_n}
&\sMZaZbsasb{s_1}{s_2}{+1}\\
&\;\;\;=\Msasb{s_1}{s_2}{+1}\cdot
\left[S_{12}^{(1)}(\tZa)+S_{12}^{(2)}(\tZa)\right]
\left[S_{34}^{(1)}(\tZb)+S_{34}^{(2)}(\tZb)\right]
\;.\fe

\subsection{$\Math{N}=4$ Super-Maxwell}
\label{subsection: N=4 Super-Maxwell}
If the massless multiplet carries helicity $+1$, then the helicities of the component fields span from $+1$ to $-1$, which describes Super-Maxwell theory. On the other hand, if the vacuum states of the massive particles are scalars, then the component amplitudes will have massive particles' spins up to spin $2$. Demanding that parity symmetry should be preserved, see \eqref{eqn: self-conjugate alpha_n}, we have
\ie
\label{Super-Maxwell amplitude}
\Math{M}(\bold{\Phi},\bold{\Phi},\Pmulti{+1})= \frac{e}{m^3} \cdot x \left[  S^{(1)}_{12}+ S^{(2)}_{12} \right]\left[ S^{(1)}_{34}+ S^{(2)}_{34} \right]
\;,\fe
where $e$ is the electrical charge and $S^{(1)}_{AB}\equiv S^{(1)}(\theta_{Z_{AB}})$. Note that $\theta_{Z_{AB}}$, which are functions of $Z_{AB}$, determines all the coefficients of the component amplitudes, including minimal and non-minimal couplings. Let's explicitly write down the form of the component amplitude ${M}(\bo^{s=2},\bt^{s=2},P^{+1})$ as an example, where both massive external states are spin 2, and the massless external state has helicity $+1$,
\ie
M(\bo^{s=2},\bt^{s=2},P^{+1})=
\frac{e}{m^3}\Bigg\{&\left[\left( 1-\frac{Z_{12}}{2m} \right)\left( 1-\frac{Z_{34}}{2m} \right)\right] x \LRA{12}^4\\
+&\left[ \frac{Z_{12}Z_{34}}{2m^2}-\frac{Z_{12}}{2m}-\frac{Z_{34}}{2m} \right] x\LRA{12}^3\frac{\MixLeft{1}{P}{2}}{m}\\
+&\left[ \frac{Z_{12}Z_{34}}{4m^2} \right] x\LRA{12}^2\left(\frac{\MixLeft{1}{P}{2}}{m}\right)^2\Bigg\}\\
\Rightarrow(g_0,g_1,g_2)=&\left( \left( 1-\frac{Z_{12}}{2m} \right)\left( 1-\frac{Z_{34}}{2m} \right) ,  \frac{Z_{12}Z_{34}}{2m^2}-\frac{Z_{12}}{2m}-\frac{Z_{34}}{2m} , \frac{Z_{12}Z_{34}}{4m^2} \right)
\;,\fe
with all the LG indices symmetrized. We can see that the component amplitude consists of non-minimal couplings. In addition, the non-minimal couplings of ${M}(\bo^{s=2},\bt^{s=2},P^{+1})$ vanish if and only if $Z_{AB}=0$.

\subsection{$\Math{N}=4$ Supergravity}
\label{subsection: N=4 SUGRA}
For $\Math{N}=4$ supergravity, since we have only 4 $\eta_P$'s, we can't obtain whole graviton spectrum (from $+2$ to $-2$) in one superamplitude. Rather, the spectrum is composed of two superamplitudes, one of which, $\Math{M}(\bold{\Phi},\bold{\Phi},\Pmulti{+2})$, has its massless spectrum range from $+2$ to $0$, and the other, $\Math{M}(\bold{\Phi},\bold{\Phi},\Pmulti{0})$, has it range from $0$ to $-2$. In general, $\Math{M}(\bold{\Phi},\bold{\Phi},\Pmulti{+2})$ and $\Math{M}(\bold{\Phi},\bold{\Phi},\phi)$ can be of the form
\ie
\label{eqn: N=4 SUGRA alpha beta}
&\Math{M}(\bold{\Phi},\bold{\Phi},\Pmulti{+2})=\frac{\kappa}{2m^2} x^2\left[ \alpha_1 S^{(1)}_{12}+ \alpha_2 S^{(2)}_{12} \right]\left[ \alpha_3 S^{(1)}_{34}+ \alpha_4 x S^{(2)}_{34} \right]\\
&\Math{M}(\bold{\Phi},\bold{\Phi},\Pmulti{0})= \frac{\kappa}{2m^2} \left[ \beta_1 S^{(1)}_{12}+ \beta_2 S^{(2)}_{12} \right]\left[ \beta_3 S^{(1)}_{34}+ \beta_4 x S^{(2)}_{34} \right]
\;.\fe
We require the following in order to preserve parity symmetry, see \eqref{eqn: N=4, massive scalar multiplets, different helicities},
\ie
\label{eqn: constraint 1 for N=4 SUGRA alpha beta}
(\alpha_1,\alpha_2,\alpha_3,\alpha_4)=(\beta_2,\beta_1,\beta_4,\beta_3)
\;.\fe

\paragraph{The $g_1$ problem of gravitational interaction}
$ $ \newline
The most general form of the 3-pt amplitude of two massive spinning particles with equal masses $m$ interacting with a massless particle, where the two massive particles are spin $S_1$ and $S_2$, and the massless particle has helicity $h$, is of the form (see ~\cite{Arkani-Hamed:2017jhn})
\ie
M({1}^{(I_1\cdots I_{2s})},{2}^{(J_1\cdots J_{2s})},P^{h})=&g_0x^{h}\prod_{i,j=1}^{2s}\LRA{\bo^{(I_i)}\bt^{(J_j)}}+g_1\frac{x^{h-1}}{m}\prod_{i,j=1}^{2s-1}\LRA{\bo^{(I_i}\bt^{(J_j}}\LRA{\bo^{2s)}P}\LRA{\bt^{{2s)}}P}\\
+&g_2\frac{x^{h-2}}{m^2}\prod_{i,j=1}^{2s-2}\LRA{\bo^{(I_i}\bt^{(J_j}}\prod_{i,j=2s-1}^{2s}\LRA{\bo^{I_i)}P}\LRA{\bt^{ J_j)}P}+\cdots
\;.\fe
It is shown in~\cite{Chung:2018kqs} that, if $h=2$ in the above equation, i.e., 3-pt amplitude includes a graviton, then $g_1$-term must vanish. Otherwise, we can't write down a Lagrangian with local operators that give raise to the amplitude.

Let's consider the component amplitude $M(\bo^{s=2},\bt^{s=2},P^{+2})$ (one of the component amplitudes of $\Math{M}(\bold{\Phi},\bold{\Phi},\Pmulti{+2})$), where both massive external states are spin 2, and the massless external state has helicity $+2$ (i.e. the graviton),
\ie
M(\bo^{s=2},\bt^{s=2},P^{+2})=
\frac{\kappa}{2m^2}\Bigg\{&\left[ \alpha_1\alpha_3\left( 1-\frac{\alpha_2}{\alpha_1}\frac{Z_{12}}{2m} \right)\left( 1-\frac{\alpha_4}{\alpha_3}\frac{Z_{34}}{2m} \right)\right] x^2 \LRA{12}^4\\
+&\left[ \alpha_2\alpha_4 \frac{Z_{12}Z_{34}}{2m^2}-\alpha_2\alpha_3\frac{Z_{12}}{2m}-\alpha_1\alpha_4\frac{Z_{34}}{2m} \right] x^2 \LRA{12}^3\frac{\MixLeft{1}{P}{2}}{m}\\
+&\left[ \alpha_2\alpha_4 \frac{Z_{12}Z_{34}}{4m^2} \right] x^2 \LRA{12}^2\left(\frac{\MixLeft{1}{P}{2}}{m}\right)^2\Bigg\}
\;,\fe
and the coupling constants are
\ie
\Rightarrow&(g_0,g_1,g_2)=\\
&\left( \alpha_1\alpha_3 \left( 1-\frac{\alpha_2}{\alpha_1}\frac{Z_{12}}{2m} \right)\left( 1-\frac{\alpha_4}{\alpha_3}\frac{Z_{34}}{2m} \right) , \alpha_2\alpha_4 \frac{Z_{12}Z_{34}}{2m^2}-\alpha_2\alpha_3\frac{Z_{12}}{2m}-\alpha_1\alpha_4\frac{Z_{34}}{2m}  ,\alpha_2\alpha_4 \frac{Z_{12}Z_{34}}{4m^2}\right)
\;.\fe
We can see the component amplitudes contains $g_1$ term, which is forbidden. In order to get rid of the $g_1$-term, we are forced to choose $\alpha_2=\alpha_4=0$\footnote{Neither can we choose $\alpha_1=0$ nor $\alpha_3=0$, because we will get $g_0=0$, and this will violate the equivalence principle, which is not what Einstein would like to see.}. This choice not only excludes the $g_1$ term in $M(\bo^{s=2},\bt^{s=2},P^{+2})$, but $g_1$ term in all component amplitudes that includes graviton, e.g., $M(\bo^{s=1},\bt^{s=1},P^{+2})$. Substitute $\alpha_2=\alpha_4=0$ into \eqref{eqn: N=4 SUGRA alpha beta}, we get
\ie
\label{SUGRA amplitude}
&\Math{M}(\bold{\Phi},\bold{\Phi},\Pmulti{+2})=\frac{\kappa}{2m^2} x^2 S^{(1)}_{12} S^{(1)}_{34} \\
&\Math{M}(\bold{\Phi},\bold{\Phi},\Pmulti{0})=\frac{\kappa}{2m^2}  S^{(2)}_{12}  S^{(2)}_{34}
\;.\fe
In fact, not only $g_1$-terms are excluded in all component amplitudes of this superamplitude, but also $g_2$ terms, leaving with only minimal coupling terms. This indicates that supersymmetry implies all components in massive scalar multiplets interact with graviton only through minimal couplings.

%%%%%%%%%%%%%%%%%%%%%%%%%%%%%%%%%%%%%%%%%%%%%%%%%%%%%%%%%%%%%%%

\section{Summary and outlook}

In this paper, we calculated the supersymmetric part in \eqref{M and A}, i.e., $\AAA\left(Z,m\right)$. Starting from $\Math{N}=2$, the delta function is first extracted, then building blocks of the superamplitude, i.e., $\Ha$ and $\Hb$ are examined. There are two solutions in $\Math{N}=2$,, which are related to each other by Grassmann Fourier transformation, see \eqref{S1: the first solution} and \eqref{S2: the second solution}. The most general solution of $\AAA\left(Z,m\right)$ in $\Math{N}=2$ is the linear combination of the two.

Two special cases are discussed, the $Z=0$ case and the BPS case ($Z=2m$). In the $Z=0$ case, the solutions can be factorized into products of $\Math{N}=1$ amplitudes, see \eqref{Z=0 limit MHV amplitude} and \eqref{Z=0 limit MHV_bar amplitude}. In the BPS limit, we know that half of the Grassmannian degrees drop out. It is shown in \eqref{eqn: BPS and non-BPS} that in BPS limit, the amplitude can be factorized into two components that are related by Grassmann Fourier transform, each of which is a BPS amplitude. In other words, non-BPS amplitude is the product of two BPS amplitudes, explaining the drop out of half of the Grassmannian degrees in the BPS limit.

The $\Math{N}=4$ supersymmetry is also considered, and especially super-Maxwell amplitude and SUGRA amplitude. The super-Maxwell amplitude indicates that there is non-minimal coupling of photons, see \eqref{Super-Maxwell amplitude}. The SUGRA amplitude showed that, if we require $g_1=0$, then all non-minimal couplings in graviton exchange vanish as well, see \eqref{SUGRA amplitude}.

Any massive spinning body can be describes as a spinning particle at large distances, and black holes are no exception. For instance, a Kerr black hole can be describe as a massive particle with large spin, interacting with graviton fields through minimal couplings~\cite{Arkani-Hamed:2019ymq}. It would be interesting to see what spinning particles are able to describe a supersymmetric black hole at large distance. Recently, relative entanglement entropy of binary Kerr black holes is found to be nearly zero for minimal coupling in the Eikonal limit, and increases when spin multipole moments are turned on~\cite{Aoude:2020mlg}. We are interested in the relative entanglement entropy of supersymmetric black holes, and examine whether BPS limit results the lowest entropy, relative to non-BPS amplitudes~\cite{Chen:2021huj}.

%%%%%%%%%%%%%%%%%%%%%%%%%%%%%%%%%%%%%%%%%%%%%%%%%%%%%%%%%%%%%%%

\section*{Acknowledgements}
I would like to thank Yu-Tin Huang for enlightening discussions whenever I was lost in the calculations. Also thank Ming-Zhi Chung, Man-Kuan Tam for discussions. The work is supported by MoST Grant No. 109-2112-M002-020-MY3.

%----------------------------------------------------------------------------------------------------

\appendix

\section{Spinor helicity formalism}

For a more detailed introduction of massive spinor helicity formalism, see~\cite{Arkani-Hamed:2017jhn}.

\paragraph{Contractions and the Levi-Civita Tensor}\label{sec:Levi-Civita}
$ $ \newline
We choose the convention of contracting the dotted and undotted spinors into square and angle brackets as:
%%%
\begin{equation}\label{eq:Spinor_Contractions}
\LRA{\lambda \mu} \equiv \lambda^{\alpha} \mu_{\alpha} = \varepsilon_{\alpha \beta} \lambda^\alpha \mu^{\beta}, \quad
\LRB{\lambda \mu} \equiv \tilde{\lambda}_{\dot{\alpha}} \tilde{\mu}^{\dot{\alpha}} = \varepsilon^{\dot{\alpha} \dot{\beta}} \tilde{\lambda}_{\dot{\alpha}} \tilde{\mu}_{\dot{\beta}}
\;.\end{equation}
%%%
Same for massive spinors that carry SU(2) indicies. Here the Levi-Civita tensor in matrix form is given by:
%%%
\begin{equation}\label{eq:Levi-Civita}
\LCT^{\alpha \beta} = \LCT^{\dot{\alpha} \dot{\beta}} = - \LCT_{\alpha \beta} = - \LCT_{\dot{\alpha} \dot{\beta}} = 
\begin{pmatrix}
0 & 1 \\
-1 & 0
\end{pmatrix}
\;,\end{equation}
%%%
such that
\begin{equation}\label{eq:Levi-Civita_contract}
\LCT^{\alpha \beta}\LCT_{\beta \gamma} = \delta^{\alpha}_{\gamma}
\;.\end{equation}

\paragraph{The Massless and Massive Momenta}\label{sec:Momenta_rep}
$ $ \newline
The momentum of the massless particle $P$ can be written as a product of two two-component spinors:
%%%
\begin{equation}\label{eq:Massless_momentum_rep}
P_{\alpha \dot{\alpha}} = \lambda_{P\alpha} \tilde{\lambda}_{P\dot{\alpha}} \equiv \RA{\lambda}\LB{\lambda}
\;.\end{equation}
%%%
and a massive momentum $p_i$ can be written as a product of two 2-by-2 matrices:
%%%
\begin{equation}\label{eq:Massive_momentum_rep}
p_{i\alpha \dot{\alpha}} = \lambda_{i\alpha}^I \tilde{\lambda}_{iI \dot{\alpha}} \equiv \RA{\bi^I}\LB{\bi_{I}}
\;.\end{equation}
%%%
where the $I$ is the SU(2) index. 
%%%

\paragraph{Explicit kinematics}
$ $\newline
For massless particles with momentum
\ie
p^\mu=
\begin{pmatrix}
E, & E\sin(\theta)\cos(\phi), & E\sin(\theta)\sin(\phi), & E\cos(\theta)
\end{pmatrix}
\;,\fe
we have
\ie
\lambda_\alpha=\sqrt{2E}
\begin{pmatrix}
s\\
-c
\end{pmatrix}
\;,\;
\tilde\lambda_{\dot\alpha}=\sqrt{2E}
\begin{pmatrix}
s^*\\
-c
\end{pmatrix}
\;.\fe
\\
For massive particles with momentum
\ie
p^\mu=
\begin{pmatrix}
E, & p\sin(\theta)\cos(\phi), & p\sin(\theta)\sin(\phi), & p\cos(\theta)
\end{pmatrix}
\;,\fe
we have
\ie
\lambda_\alpha^I=
\begin{pmatrix}
&\sqrt{E+p}s & \sqrt{E-p}c^* \\
&-\sqrt{E+p}c & \sqrt{E-p}s
\end{pmatrix}
\;,\;
\tilde\lambda_{\dot\alpha}^I=
\begin{pmatrix}
&\sqrt{E-p}c & \sqrt{E-p}s \\
&-\sqrt{E+p}s & \sqrt{E+p}c^*
\end{pmatrix}
\;,\fe
where $c\equiv\cos\left(\frac{\theta}{2}\right)e^{i\phi}$, $s\equiv\sin\left(\frac{\theta}{2}\right)$.

\paragraph{Contractions of Massive Spinors}
$ $ \newline
The on-shell condition for massive particle is given by 
%%%
\begin{equation}\label{eq:Massive-on-shell-condition}
p_i^2 = \LRA{\bi^I \bi^J}\LRB{\bi_J \bi_I} = m_i^2
\;,\end{equation}
%%%
where we choose
%%%
\begin{equation}\label{eq:Ang_and_sqr_same_massive_spinor_fixed}
\boxed{\LRA{\bi^I \bi^J} = -m_i \LCT^{IJ},\quad \LRB{\bi^I \bi^J} = +m_i \LCT^{IJ}.}
\end{equation}
%%%
which will be repeatedly used throughout massive amplitude calculations
%%%
With these conventions, we find that momentum contracting with the massive spinors are:
\ie
&m_i \RA{\bi^I} = +p_i \RB{\bi^I}, \quad m_i\RB{\bi^I} = +p_i \RA{\bi^I} \\
&m_i \LA{\bi^I} = - \LB{\bi^I}p_i, \quad m_i\LB{\bi^I} = - \LA{\bi^I}p_i 
\;.\fe
%%%
For massive spinors associated with the same particle whose LG indices contracted:
\ie
&\RA{\bi^I}\epsilon_{IJ}\LA{\bi^J}=-m_{i}\epsilon^{\alpha\dot{\alpha}}\;;\;
\RB{\bi^I}\epsilon_{IJ}\LB{\bi^J}=m_{i}\epsilon^{\alpha\dot{\alpha}}
\\
&\RA{\bi^I}\epsilon_{IJ}\LB{\bi^J}=p_{i\alpha\dot{\alpha}}\;;\;
\RB{\bi^I}\epsilon_{IJ}\LA{\bi^J}=-p_{i}^{\alpha\dot{\alpha}}
\;.\fe

\paragraph{Definition and convention of the $x$-factor}
$ $ \newline
The external momenta satisfy
%%%
\begin{equation}\label{eq:3pt_momentum_conservation}
p_1 + p_2 + P = 0
\;.\end{equation}
%%%
The momentum conservation condition \eqref{eq:3pt_momentum_conservation} and the on-shell condition yields:
%%%
\begin{equation}\label{eq:3pt_momentum_conservation_in_bra-ket}
2p_1 \cdot P = \MixLeft{P}{p_1}{P} = \lambda^{\alpha}_P p_{1 \alpha \dot{\alpha}}\tilde{\lambda}_P^{\dot{\alpha}} = 0
\;,\end{equation}
%%%
so that $\lambda_{P\alpha}$ is proportional to $ p_{1 \alpha \dot{\alpha}}\tilde{\lambda}_P^{\dot{\alpha}} $. This allow us to define the $x$-factor:
%%%
\begin{equation}\label{eq:x-def}
\boxed{
x \lambda_{P \alpha} = \frac{p_{1 \alpha \dot{\alpha}}\tilde{\lambda}_P^{\dot{\alpha}}}{m}, \quad \frac{\tilde{\lambda}_P^{ \dot{\alpha}} }{x} = \frac{p_1 ^{\dot{\alpha} \alpha} \lambda_{P \alpha}}{m} 
}
\end{equation}

%%%%%%%%%%%%%%%%%%%%%%%%%%%%%%%%%%%%%%%%%%%%%%%%%%%%%%%%%%%%%%%

\section{Grassmann variables}
\label{appendix: Grassmannian variables}

\paragraph{Convention of the Grassmann variables $\eta$}
$ $ \newline
When contract with Levi-Civita tensors, the massive Grassmann variables $\eta$'s transform as
\ie
\eta_{iIA}\equiv\epsilon_{AB}\eta_{iI}^B &\Rightarrow \frac{\partial}{\partial\eta_{iIA}}=-\epsilon^{AB}\frac{\partial}{\partial\eta_{iI}^B}
\\
\eta_i^{IA}\equiv\epsilon^{IJ}\eta_{iJ}^A &\Rightarrow\frac{\partial}{\partial\eta_i^{IA}}=-\epsilon_{IJ}\frac{\partial}{\partial\eta_{iJ}^A}
\;.\fe
To simplify equations, let's denote the contraction rules for massive Grassmann variables $\eta_i$ and for massless Grassmann variables $\eta_P$
\ie
\label{eqn: appendix: contraction rules}
&\eta_i^A\cdot\eta_i^B\equiv-\frac{1}{2}\epsilon^{IJ}\eta_{iI}^A\eta_{iJ}^B\\
&\eta_P\cdot\eta_P\equiv\frac{1}{2}\eta_P^A\eta_{PA}
\;.\fe

\paragraph{Grassmann Fourier transformation}
$ $ \newline
The Grassmann Fourier transformation of a function $f(\eta)$ in $\eta$ basis to $\bar\eta$ basis is
\ie
&\bar f(\bar\eta) =\int d\eta\cdot e^{\bar\eta\eta}f(\eta)\\
&f(\eta) =\int d\bar\eta\cdot e^{-\eta\bar\eta}\bar f(\bar\eta)
\;.\fe
For Grassmann variables of massless particles, the Grassmann Fourier transformation is
\ie
\label{eqn: fourier transform-massless}
\bar f(\bar\eta) =\int d\eta^A d\eta_A\cdot e^{\bar\eta^A\eta_A}f(\eta)
\;,\fe
while for Grassmann variables of massive particles
\ie
\label{eqn: fourier transform-massive}
\bar f(\bar\eta) =\int d\eta_I^A d\eta^{IB}d\eta_{JA}d\eta_{B}^J\cdot e^{\bar\eta_I^A\eta_A^I}f(\eta)
\;.\fe

%%%%%%%%%%%%%%%%%%%%%%%%%%%%%%%%%%%%%%%%%%%%%%%%%%%%%%%%%%%%%%%

\section{More on the building blocks}
\label{appendix: More on Ha and Hb}

$\Ha^A$ and $\Hb_A$ play crucial rules in this paper, they serve as basic building blocks of the superamplitude. This appendix is devoted to introducing several important properties of them, which will be useful if one wants to reproduce the calculations we have done in this paper. To simplify equations, let's denote
\ie
\etas{i}{A}\equiv-{1\over2}\epsilon^{IJ}\eta_{iI}^A\eta_{iJ}^A \;\;\text{(no sum over $i$ and $A$)} \Rightarrow \eta_{iI}^A\eta_{iJ}^A=\epsilon_{IJ}\etas{i}{A}
\;,\fe
which is just a special case of \eqref{eqn: appendix: contraction rules}.

\begin{lemma}
\label{appendix lemma: LRA{P1}eta_{1}^A}
\ie
\LRA{P\bo^I}\eta_{1I}^A \eqd \frac{1}{x}\Ha^A-\Hb^A
\;.\fe
\end{lemma}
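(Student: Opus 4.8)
The plan is to work throughout on the support of the delta function $\delta^{(2)}(\eta)=\prod_A\delta^A$ and to exploit the two defining proportionality relations that characterize $\Ha^A$ and $\Hb^A$. Contracting the definitions \eqref{eqn: expressing via reference spinors} against the reference spinors shows that on the delta support
\ie
\sum_{i=1}^2\RA{\bi^I}\eta_{iI}^A\eqd\Ha^A\RA{P}\;\;,\;\;\sum_{i=1}^2\sigma_i\RB{\bi^I}\eta_{iI}^A\eqd\Hb^A\RB{P}
\;.\fe
Together with the delta constraint $\sum_{i=1}^2\LRA{P\bi^I}\eta_{iI}^A\eqd0$, the on-shell contraction rules $p_1\RA{\bo^I}=m\RB{\bo^I}$ and $p_2\RA{\bt^I}=m\RB{\bt^I}$, momentum conservation $p_1=-p_2-P$, the bispinor identity $P\RA{\bt^I}=\RB{P}\LRA{P\bt^I}$, and the $x$-factor relation \eqref{eq:x-def} in the form $p_1\RA{P}=\tfrac{m}{x}\RB{P}$, these ingredients suffice to isolate $\LRA{P\bo^I}\eta_{1I}^A$.

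\textbf{Steps.} First I would use the delta constraint to trade the particle-$1$ quantity for the particle-$2$ one, $\LRA{P\bo^I}\eta_{1I}^A\eqd-\LRA{P\bt^I}\eta_{2I}^A$, so that it is enough to evaluate $\LRA{P\bt^I}\eta_{2I}^A$. Next I would act with the momentum $p_1$ on the angle relation above, converting every angle ket into a square ket: the particle-$1$ term gives $m\RB{\bo^I}$; for the particle-$2$ term momentum conservation yields $p_1\RA{\bt^I}=-m\RB{\bt^I}-\RB{P}\LRA{P\bt^I}$, where the piece $P\RA{\bt^I}=\RB{P}\LRA{P\bt^I}$ is exactly what produces the target bracket; and the right-hand side becomes $\tfrac{m}{x}\Ha^A\RB{P}$ through the $x$-factor. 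Recognizing $\RB{\bo^I}\eta_{1I}^A-\RB{\bt^I}\eta_{2I}^A=\sum_i\sigma_i\RB{\bi^I}\eta_{iI}^A\eqd\Hb^A\RB{P}$ from the square relation, the whole equation becomes proportional to the single spinor $\RB{P}$. Stripping off $\RB{P}$ (equivalently contracting with $\LB{\xi}$ and dividing by $\LRB{\xi P}\ne0$) leaves a scalar equation that I solve algebraically for $\LRA{P\bt^I}\eta_{2I}^A$ in terms of $\Ha^A$ and $\Hb^A$; feeding the result back through the delta constraint then gives $\LRA{P\bo^I}\eta_{1I}^A$.

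\textbf{Main obstacle.} The delicate part is the bookkeeping of the momentum contractions: one must correctly resolve $p_1\RA{\bt^I}$ into $\RB{\bt^I}$ and $\RB{P}\LRA{P\bt^I}$ via $p_1=-p_2-P$, correctly invoke the $x$-factor for $p_1\RA{P}$, and keep track of the overall $\RB{P}$ that is finally removed; a sign or a misplaced $x$ anywhere changes the combination $\tfrac1x\Ha^A-\Hb^A$. I also expect the on-shell relations to generate one power of the mass $m$ on the right-hand side (indeed dimensional analysis forces the right-hand side to be linear in $m$, since $\LRA{P\bo^I}$ has mass dimension one while $\Ha^A,\Hb^A,x$ are dimensionless), so I would treat the precise overall normalization as a convention-matching check against the definitions of $x$, $\Ha$, and $\Hb$ before asserting the identity exactly as stated.
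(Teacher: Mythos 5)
Your proposal is correct and is essentially the paper's own proof: the paper likewise acts with $p_1$ (invoking the $x$-factor relation), uses the on-shell identities $p_i\RA{\bi^I}=m\RB{\bi^I}$, momentum conservation, and the delta-support identifications of $\Ha^A$ and $\Hb^A$, the only difference being that the paper contracts everything with a reference spinor $\LB{\zeta}$ from the start while you work with ket-valued equations and strip off $\RB{P}$ at the end. Your flagged factor of $m$ is a genuine feature: the paper's statement and proof implicitly set $m=1$ (it says ``Let $m=1$'' explicitly only in the proof of the subsequent lemma), so your normalization caution correctly matches the paper's convention.
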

\begin{proof}
\ie
&\LRB{\zeta P}\left( {1\over x}\Ha^A-\Hb^A \right)=\MixRight{\zeta}{p_1}{P}\Ha^A-\LRB{\zeta P}\Hb^A\\
\eqd &\sum_{i=1}^2\MixRight{\zeta}{p_1}{\bi^I}\eta_{iI}^A-\sum_{i=1}^2\LRB{\zeta \bi^I}\sigma_i\eta_{iI}^A\\
=&\MixRight{\zeta}{(p_1+p_2)}{\bt^I}\eta_{2I}^A \eqd  \LRB{\zeta P}\LRA{P\bo^I}\eta_{1I}^A
\;.\fe
\end{proof}

\begin{lemma}
\ie
&\LRA{\bo^I\bt^J}\eta_{1I}^A\eta_{2JA} \eqd -\frac{2}{x}\HHa+2\HHab\\
&\LRB{\bo^I\bt^J}\eta_{1I}^A\eta_{2JA}\eqd -2x\HHb-2\HHab
\;.\fe
\end{lemma}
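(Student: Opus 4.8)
The plan is to reduce both identities to the preceding lemma via a Schouten identity, so that the non-covariant single-particle pieces reorganize into the building blocks $\Ha$ and $\Hb$. I treat the two lines in turn. For the first line, I would start from the angle-bracket Schouten identity contracted with the reference spinor,
\[
\LRA{\zeta\bo^I}\LRA{\bt^J P}+\LRA{\zeta\bt^J}\LRA{P\bo^I}+\LRA{\zeta P}\LRA{\bo^I\bt^J}=0,
\]
and multiply through by $\eta_{1I}^A\eta_{2JA}$. This writes the target $T\equiv\LRA{\bo^I\bt^J}\eta_{1I}^A\eta_{2JA}$ as $-\tfrac{1}{\LRA{\zeta P}}$ times a sum of two products. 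Two inputs evaluate the factors on the delta support: (i) the preceding lemma, together with its particle-2 image from $\delta^A\eqd 0$ (i.e. $\LRA{P\bt^I}\eta_{2I}^A\eqd-\LRA{P\bo^I}\eta_{1I}^A$), gives $\LRA{P\bo^I}\eta_{1I}^A\eqd\tfrac1x\Ha^A-\Hb^A$ and hence $\LRA{\bt^J P}\eta_{2JA}\eqd\tfrac1x\Ha_A-\Hb_A$; and (ii) the definition of $\Ha$ supplies the \emph{exact} relation $\LRA{\zeta\bo^I}\eta_{1I}^A+\LRA{\zeta\bt^I}\eta_{2I}^A=\LRA{\zeta P}\Ha^A$ for the remaining, reference-dependent factors.

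The crux is that $\LRA{\zeta\bo^I}\eta_{1I}^A$ and $\LRA{\zeta\bt^J}\eta_{2JA}$ are not separately little-group building blocks, so they must recombine. Using the Grassmann reordering identity $X^AY_A=Y^AX_A$ (valid for odd $X,Y$, because the $\LCT$-sign and the anticommutation sign cancel), both Schouten products share the common factor $\tfrac1x\Ha-\Hb$, and the two reference pieces add up to precisely $\LRA{\zeta P}\Ha^A$. The explicit $\LRA{\zeta P}$ then cancels the prefactor, and with $\Ha^A\Ha_A=2\HHa$ and $\Ha^A\Hb_A=2\HHab$ one is left with $T\eqd-\tfrac2x\HHa+2\HHab$; the reference spinor has dropped out, as it must. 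This is the first claimed identity.

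For the second line I would run the identical argument in square brackets: the Schouten identity $\LRB{\xi\bo^I}\LRB{\bt^J P}+\LRB{\xi\bt^J}\LRB{P\bo^I}+\LRB{\xi P}\LRB{\bo^I\bt^J}=0$ contracted with $\eta_{1I}^A\eta_{2JA}$, the definition of $\Hb$ in the form $\LRB{\xi\bo^I}\eta_{1I}^A-\LRB{\xi\bt^I}\eta_{2I}^A=\LRB{\xi P}\Hb^A$, and the square-bracket counterpart of the preceding lemma, $\LRB{P\bo^I}\eta_{1I}^A\eqd-\Ha^A-x\Hb^A$, which I would establish by the same chain used there (momentum conservation $p_1+p_2+P=0$, the $x$-factor relation, and the on-shell contraction $\LRA{\bi^I\bi^J}=-m\LCT^{IJ}$). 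I expect the main obstacle to sit exactly here: the relative sign $\sigma_i=(1,-1)$ built into $\Hb$ flips both the square-bracket analogue of $\delta^A\eqd 0$ (now $\LRB{P\bt^I}\eta_{2I}^A\eqd+\LRB{P\bo^I}\eta_{1I}^A$) and the recombination relation into a \emph{difference}, and it is this double flip that converts the $\tfrac1x$ of the first line into the $-x$ of the second, yielding $\LRB{\bo^I\bt^J}\eta_{1I}^A\eta_{2JA}\eqd-2x\HHb-2\HHab$. A naive angle$\leftrightarrow$square conjugation without tracking the $\sigma_i$ signs produces the wrong sign on the $x\HHb$ term, so keeping the $\sigma_i$ bookkeeping straight through the Grassmann reorderings is the only genuinely delicate point; the remainder is mechanical.
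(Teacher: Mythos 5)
Your proof of the first identity is correct, and it is essentially a mild variant of the paper's own: the paper completes the sum over particles by adding the vanishing same-particle term $\LRA{\bo^I\bo^J}\eta_{1I}^A\eta_{1JA}=0$ and then applies the support relation $\sum_j\RA{\bj^J}\eta_{jJ}^A\eqd\RA{P}\Ha^A$ directly, while you reach the same place through a Schouten expansion with a reference spinor plus the reordering identity $X^AY_A=Y^AX_A$; both routes then close with the preceding lemma, $\LRA{P\bo^I}\eta_{1I}^A\eqd\tfrac{1}{x}\Ha^A-\Hb^A$. Your extra machinery is harmless but buys nothing over the paper's two-line argument.

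The genuine gap is in the second identity. Both your route and the paper's reduce it to $\Hb^A\rho_A$ with $\rho^A\equiv\LRB{P\bo^I}\eta_{1I}^A$, so everything hinges on the value of $\rho^A$ — and you assert $\rho^A\eqd-\Ha^A-x\Hb^A$ without deriving it, saying it follows ``by the same chain'' as the preceding lemma. It does not, at least not under the conventions the paper states. Evaluating $\MixLeft{\bo^I}{p_1}{P}$ two ways, once with the $x$-factor definition $p_1\RB{P}=mx\RA{P}$ and once with $\LA{\bo^I}p_1=-m\LB{\bo^I}$, gives
\ie
mx\LRA{\bo^IP}=\MixLeft{\bo^I}{p_1}{P}=-m\LRB{\bo^IP}
\quad\Rightarrow\quad
\LRB{P\bo^I}=-x\LRA{P\bo^I}
\;,\fe
an exact, $I$-independent spinor identity. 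Consequently $\rho^A=-x\LRA{P\bo^I}\eta_{1I}^A\eqd-x\left(\tfrac{1}{x}\Ha^A-\Hb^A\right)=-\Ha^A+x\Hb^A$: whatever overall sign the bracket-conversion carries under a different convention, $\rho^A$ is forced to be proportional to $\tfrac{1}{x}\Ha^A-\Hb^A$ and therefore always inherits the \emph{relative} minus sign of the preceding lemma; it can be $\pm(\Ha^A-x\Hb^A)$ but never $-\Ha^A-x\Hb^A$. Feeding this through either route yields $\LRB{\bo^I\bt^J}\eta_{1I}^A\eta_{2JA}\eqd 2x\HHb-2\HHab$, i.e.\ the $x\HHb$ and $\HHab$ terms come out with \emph{opposite} signs, in tension with the printed statement. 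So the one step you yourself single out as ``the only genuinely delicate point'' — the $\sigma_i$/sign bookkeeping in the square-bracket sector — is precisely the step you did not carry out; your $\rho^A$ is reverse-engineered from the statement to be proven rather than derived, and the derivation you cite does not produce it. To close the gap you must either derive $\rho^A$ honestly within the paper's stated conventions (and then confront the resulting sign of the $x\HHb$ term), or exhibit the unstated convention under which the printed sign is the consistent one.
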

\begin{proof}
$ $ \newline
Let's just proof the first equation in the lemma, since the proof of the second is similar to that of the first.
\ie
&\LRA{\bo^I\bt^J}\eta_{1I}^A\eta_{2JA}\\
=&\LRA{\bo^I\bo^J}\eta_{1I}^A\eta_{1JA} +\LRA{\bo^I\bt^J}\eta_{1I}^A\eta_{2JA} \\
\eqd&\LRA{\bo^I P}\eta_{1I}^A \Ha_A \eqd -\frac{2}{x}\HHa+2\HHab
\;,\fe
where the last equality follows from Lemma \ref{appendix lemma: LRA{P1}eta_{1}^A}.
\end{proof}

\begin{lemma}
\ie
\label{eqn: delta H_a H_b}
\begin{cases}
\delta^1\Ha^1\eqd -{1\over 2}\sum_{i,j}\LRA{\bi^I\bj^J}\eta_{iI}^1\eta_{jJ}^1\\
\delta^2\Ha^2\eqd-{1\over 2}\sum_{i,j}\LRA{\bi^I\bj^J}\eta_{iI}^2\eta_{jJ}^2\\
\delta^1\Hb_2\eqd{1\over 2x}\sum_{i,j}\LRB{\bi^I\bj^J}\eta_{iI}^1\eta_{jJ}^1\\
\delta^2\Hb_1\eqd-{1\over 2x}\sum_{i,j}\LRB{\bi^I\bj^J}\eta_{iI}^2\eta_{jJ}^2
\end{cases}
\;.\fe
\end{lemma}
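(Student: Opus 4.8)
The four identities come in two families: the two $\mathcal{H}$-equations are the $A=1,2$ instances of $\delta^A\Ha^A\eqd-\tfrac{1}{2}\sum_{i,j}\LRA{\bi^I\bj^J}\eta_{iI}^A\eta_{jJ}^A$, and the two $\bar{\mathcal{H}}$-equations are the $A=1,2$ instances of $\delta^A\Hb^A\eqd\tfrac{1}{2x}\sum_{i,j}\LRB{\bi^I\bj^J}\eta_{iI}^A\eta_{jJ}^A$ (the lowered labels in $\Hb_2,\Hb_1$ and the $\pm$ signs are just the $\epsilon_{AB}$ bookkeeping, since $\Hb_2=\Hb^1$ and $\Hb_1=-\Hb^2$). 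The plan is to prove each master identity by feeding the Schouten identity into the definitions of $\delta^A,\Ha^A,\Hb^A$ in \eqref{eqn: the building blocks}.

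For the $\mathcal{H}$-family I would start from the angle-bracket Schouten identity $\LRA{\zeta P}\LRA{\bi^I\bj^J}=\LRA{\zeta\bi^I}\LRA{P\bj^J}-\LRA{\zeta\bj^J}\LRA{P\bi^I}$, multiply by $\eta_{iI}^A\eta_{jJ}^A$, and sum over $i,j,I,J$. Because both $\delta^A$ and $\Ha^A$ are built from angle brackets carrying no $\sigma_i$ weights, the recognitions $\sum_i\LRA{\zeta\bi^I}\eta_{iI}^A=\LRA{\zeta P}\Ha^A$ and $\sum_i\LRA{P\bi^I}\eta_{iI}^A=\delta^A$ apply directly, and the two terms on the right collapse (after a Grassmann reordering that makes them equal) to $2(\LRA{\zeta P}\Ha^A)\,\delta^A$. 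The reference spinor $\zeta$ then cancels, leaving $\sum_{i,j}\LRA{\bi^I\bj^J}\eta_{iI}^A\eta_{jJ}^A=2\,\Ha^A\delta^A=-2\,\delta^A\Ha^A$. This gives the first two equations; they in fact hold as exact identities, so the $\eqd$ statement follows a fortiori.

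The $\bar{\mathcal{H}}$-family is harder, because the object that pairs naturally with $\Hb^A$ under a square-bracket Schouten is the square-bracket delta $\tilde{\delta}^A\equiv\sum_i\sigma_i\LRB{P\bi^I}\eta_{iI}^A$, whereas the left-hand side carries the angle-bracket delta $\delta^A$. The bridge is momentum conservation: from the on-shell relations $m\RA{\bi^I}=p_i\RB{\bi^I}$, together with $p_1+p_2+P=0$ and the masslessness $\LRA{PP}=\LRB{PP}=0$ (so that $\MixLeft{P}{p_1}{P}=0$), one derives the $x$-factor relations $x\LRA{P\bo^I}=-\LRB{P\bo^I}$ and $x\LRA{P\bt^I}=+\LRB{P\bt^I}$ — the opposite signs reflecting $\sigma_1=-\sigma_2$ — which combine into the single exact relation $\delta^A=-\tfrac{1}{x}\,\tilde{\delta}^A$. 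Substituting this and applying the square-bracket Schouten $\LRB{\xi P}\LRB{\bi^I\bj^J}=\LRB{\xi\bi^I}\LRB{P\bj^J}-\LRB{\xi\bj^J}\LRB{P\bi^I}$ to $\tilde{\delta}^A\Hb^A$ — now both factors carry matching $\sigma_i$ weights, so $\sum_i\sigma_i\LRB{\xi\bi^I}\eta_{iI}^A=\LRB{\xi P}\Hb^A$ applies — produces $\tfrac{1}{2x}\sum_{i,j}\sigma_i\sigma_j\LRB{\bi^I\bj^J}\eta_{iI}^A\eta_{jJ}^A$, whose diagonal $i=j$ part is evaluated with $\LRB{\bi^I\bi^J}=m\,\epsilon^{IJ}$ and $\eta_{iI}^A\eta_{iJ}^A=\epsilon_{IJ}\etas{i}{A}$; reorganizing this against the claimed right-hand side yields the third and fourth equations.

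The main obstacle is precisely this angle$\leftrightarrow$square conversion together with the attendant $\sigma_i$ bookkeeping: the natural manipulation delivers the $\sigma_i\sigma_j$-weighted square sum, so matching it to the unweighted sum written in the statement is the delicate step, and it must be carried out on the support of the delta function, using $\tilde{\delta}^A\eqd0$ (equivalently $\delta^A\eqd0$, which holds because $(\delta^A)^2=0$). Keeping the signs straight through the Schouten rearrangements, the $\epsilon_{AB}$ raising/lowering of the R-charge label, and the Grassmann reorderings is where the real care lies; everything else is a mechanical substitution of the definitions in \eqref{eqn: the building blocks}.
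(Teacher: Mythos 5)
Your treatment of the first two equations is exactly the paper's proof: expand $\delta^A\Ha^A$ using the reference-spinor definition, apply the angle-bracket Schouten identity, and use the Grassmann relabeling $(i,I)\leftrightarrow(j,J)$ to fold one of the two resulting terms back into the original, producing the factor of $\tfrac12$; as you note, these hold as exact identities (the paper's proof uses plain equalities throughout, and the $\eqd$ is inessential there). For the last two equations the paper only says they ``can be proven in similar ways,'' and your route --- introducing $\tilde\delta^A\equiv\sum_i\sigma_i\LRB{P\bi^I}\eta_{iI}^A$, proving $\tilde\delta^A=-x\,\delta^A$ from the $x$-factor definition plus momentum conservation, then running the square-bracket Schouten --- is indeed the natural ``similar way,'' and it is correct as far as it goes: it yields, exactly,
\ie
\delta^1\Hb_2 \;=\; \frac{1}{2x}\sum_{i,j}\sigma_i\sigma_j\LRB{\bi^I\bj^J}\eta_{iI}^1\eta_{jJ}^1
\;.\fe

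The genuine gap is the step you defer to the end: converting this $\sigma_i\sigma_j$-weighted sum into the unweighted sum written in the statement. You assert this can be done ``on the support of the delta function, using $\tilde\delta^A\eqd0$,'' but you never perform it, and in fact it cannot be performed. The diagonal $i=j$ terms of the weighted and unweighted sums already coincide (since $\sigma_i^2=1$), so your remark about evaluating them with $\LRB{\bi^I\bi^J}=m\,\epsilon^{IJ}$ is beside the point; the entire discrepancy sits in the cross terms, $\pm4\LRB{\bo^I\bt^J}\eta_{1I}^1\eta_{2J}^1$, and this difference is not annihilated by $\delta^1\delta^2$. Concretely, using $\eta_{iI}^1\eta_{iJ}^1=\epsilon_{IJ}\etas{i}{1}$, momentum conservation, and the on-shell contraction rules, one finds (up to overall convention-dependent signs)
\ie
\delta^1\cdot\LRB{\bo^I\bt^J}\eta_{1I}^1\eta_{2J}^1 \;=\; m\left[\,\etas{1}{1}\,\LRA{P\bt^J}\eta_{2J}^1+\etas{2}{1}\,\LRA{P\bo^I}\eta_{1I}^1\,\right]
\;,\fe
whose two pieces are linearly independent cubic Grassmann monomials in the $\eta^1$'s with generically nonzero coefficients; multiplying by $\delta^2$, which is built entirely from the independent $\eta^2$'s, cannot make them cancel. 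Equivalently, the cross term lies in the ideal generated by $\delta^1,\delta^2$ only if $\LRB{\bo^I\bt^J}\propto\LRA{P\bo^I}\LRA{P\bt^J}$, which fails for generic three-point kinematics. So the ``reorganization'' you appeal to is a step that would fail.

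What your (correct) derivation actually terminates at is the weighted identity displayed above, and that is precisely the form the paper itself relies on downstream: the covariant rewriting \eqref{eqn: meeting the delta functions (order 4)} is expressed through $\tilde{\Math{Q}}^A_{\dot\alpha}=\tilde\lambda^I_{1\dot\alpha}\eta^A_{1I}-\tilde\lambda^I_{2\dot\alpha}\eta^A_{2I}$, whose square brackets are exactly the $\sigma_i\sigma_j$-weighted sums. In other words, the unweighted right-hand sides in the third and fourth lines of the lemma should be read with the relative minus sign between the two massive legs; your argument proves that corrected statement, but it does not (and cannot) deliver the literal unweighted one, and the proposal papers over this by asserting the conversion rather than confronting it.
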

\begin{proof}
$ $ \newline
\ie
\delta^1\Ha^1=&\delta^1\cdot \frac{1}{\LRA{\zeta P}} \sum_{j=1}^2\LRA{\zeta \bj^J}\eta_{jJ}^A\\
=&\frac{1}{\LRA{\zeta P}}\sum_{i,j}\LRA{P \bi^I}\LRA{\zeta \bj^J} \eta_{iI}^A\eta_{jJ}^A\\
=&-{1\over2}\frac{1}{\LRA{\zeta P}}\sum_{i,j}\LRA{\zeta P}\LRA{\bi^I\bj^J} \eta_{iI}^A\eta_{jJ}^A
\;,\fe
where we used Schouten identity. The other equations can be proven in similar ways.
\end{proof}

\begin{lemma}
\ie
\label{eqn: delta H_a H_b = R}
\begin{cases}
\delta^1\Ha^1\Hb_2\eqd-{1\over x}\left( \LRB{P\bo^I}\eta_{1I}^1\etas{2}{1}+\LRB{P\bt^I}\eta_{2I}^1\etas{1}{1} \right)\\
\delta^2\Ha^2\Hb_1\eqd{1\over x}\left( \LRB{P\bo^I}\eta_{1I}^2\etas{2}{2}+\LRB{P\bt^I}\eta_{2I}^2\etas{1}{2} \right)
\end{cases}
\;.\fe
\end{lemma}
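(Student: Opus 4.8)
The plan is to reduce the triple product to the already-proven Lemma~\ref{eqn: delta H_a H_b} and then clear the residual terms with on-shell kinematics. Working throughout modulo the delta function (that is, under $\eqd$), I would first anticommute the odd factor $\Ha^1$ past $\delta^1$,
\ie
\delta^1\Ha^1\Hb_2 = -\,\Ha^1\left(\delta^1\Hb_2\right)
\;,\fe
and insert the third line of Lemma~\ref{eqn: delta H_a H_b}, $\delta^1\Hb_2\eqd\frac{1}{2x}\sum_{i,j}\LRB{\bi^I\bj^J}\eta_{iI}^1\eta_{jJ}^1$, which already supplies the expected $1/x$ and is written purely in square brackets. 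Splitting the double sum into its diagonal ($i=j$) and off-diagonal parts and using $\LRB{\bi^I\bi^J}=m\,\epsilon^{IJ}$ from \eqref{eq:Ang_and_sqr_same_massive_spinor_fixed} together with $\eta_{iI}^1\eta_{iJ}^1=\epsilon_{IJ}\etas{i}{1}$, the diagonal part collapses to $-\frac{m}{x}\big(\etas{1}{1}+\etas{2}{1}\big)$ while the off-diagonal part becomes $\frac{1}{x}\LRB{\bo^I\bt^J}\eta_{1I}^1\eta_{2J}^1$.

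Next I would multiply by $\Ha^1=\frac{1}{\LRA{\zeta P}}\sum_i\LRA{\zeta\bi^I}\eta_{iI}^1$ (see \eqref{eqn: expressing via reference spinors}). The decisive simplification is nilpotency: each $\eta_{iI}^1$ carries only two little-group values, so any monomial with three $\eta^1$'s belonging to the same leg vanishes. Hence $\Ha^1$ hitting $\etas{1}{1}$ retains only the $i=2$ term, $\Ha^1$ hitting $\etas{2}{1}$ only the $i=1$ term, and the off-diagonal term keeps both. After reducing with $\eta_{1I}^1\eta_{1K}^1=\epsilon_{IK}\etas{1}{1}$ (and the analogue for leg $2$), the three surviving cubic monomials are precisely those of the form $\eta_1(\eta_2)^2$ and $(\eta_1)^2\eta_2$ displayed on the right-hand side.

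The crux — and the step I expect to be the main obstacle — is to show that the reference spinor $\zeta$ and the explicit factors of $m$ drop out. In the off-diagonal contribution the little-group index of one massive leg is contracted into a momentum, producing mixed brackets $\MixLeft{\zeta}{p_1}{\bt^J}$ and $\MixLeft{\zeta}{p_2}{\bo^I}$. Here I would use momentum conservation \eqref{eq:3pt_momentum_conservation}, $p_1=-p_2-P$, to split each into a part proportional to $m$ (via $\MixLeft{\zeta}{p_2}{\bt^J}\propto m\,\LRA{\zeta\bt^J}$, again using \eqref{eq:Ang_and_sqr_same_massive_spinor_fixed}) and a part proportional to $P=\RA{P}\LB{P}$, for which $\MixLeft{\zeta}{P}{\bt^J}=\LRA{\zeta P}\LRB{P\bt^J}$. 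The $m$-pieces cancel exactly the diagonal contributions found above, while in the $P$-pieces the factor $\LRA{\zeta P}$ cancels the $1/\LRA{\zeta P}$ carried by $\Ha^1$; what survives is reference-independent and manifestly in square brackets, namely $-\frac{1}{x}\big(\LRB{P\bo^I}\eta_{1I}^1\etas{2}{1}+\LRB{P\bt^I}\eta_{2I}^1\etas{1}{1}\big)$. This is the first identity, and the second follows by the same computation under $1\leftrightarrow2$, tracking the signs $\sigma_i$ and the relative sign between the third and fourth lines of Lemma~\ref{eqn: delta H_a H_b}, which flips the overall sign and carries the $A=1$ components to the $A=2$ components.
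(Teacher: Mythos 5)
Your strategy is the same as the paper's: anticommute $\Ha^1$ past $\delta^1$, insert \eqref{eqn: delta H_a H_b}, expand by nilpotency, and use momentum conservation to split the mixed brackets into $m$-pieces and $P$-pieces. The gap is in the sign bookkeeping, which in this lemma \emph{is} the entire content. You expand $\delta^1\Hb_2$ with opposite relative signs between the off-diagonal and diagonal parts (off-diagonal $+\tfrac{1}{x}\LRB{\bo^I\bt^J}\eta_{1I}^1\eta_{2J}^1$, diagonal $-\tfrac{m}{x}(\etas{1}{1}+\etas{2}{1})$), and then assert that the $m$-pieces generated by momentum conservation cancel the diagonal contributions. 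With those signs they do not. Running your own steps with a general relative sign, $\delta^1\Hb_2 = \tfrac{a}{x}\LRB{\bo^I\bt^J}\eta_{1I}^1\eta_{2J}^1 + \tfrac{b\,m}{x}\left(\etas{1}{1}+\etas{2}{1}\right)$, one finds
\ie
-\Ha^1\left(\delta^1\Hb_2\right) = \frac{(a-b)\,m}{x\LRA{\zeta P}}\left[\LRA{\zeta\bt^J}\eta_{2J}^1\etas{1}{1}+\LRA{\zeta\bo^I}\eta_{1I}^1\etas{2}{1}\right]
+\frac{a}{x}\left[\LRB{P\bt^J}\eta_{2J}^1\etas{1}{1}+\LRB{P\bo^I}\eta_{1I}^1\etas{2}{1}\right]
\;,\fe
so the reference-spinor-dependent $m$-terms drop out only when $a=b$, and the lemma is reproduced only when $a=b=-1$. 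Your signs are $a=+1$, $b=-1$: the $m$-pieces add instead of cancel, and the surviving $P$-piece comes out with the wrong overall sign.

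The version that actually enters the paper's own proof (its third displayed line) is the same-sign expansion, $\delta^1\Hb_2 \eqd -\tfrac{1}{x}\left[\LRB{\bo^I\bt^J}\eta_{1I}^1\eta_{2J}^1 + m\left(\etas{1}{1}+\etas{2}{1}\right)\right]$, and this is what a direct computation gives: converting $\delta^1$ into square brackets uses $x\LRA{P\bi^I}=-\sigma_i\LRB{P\bi^I}$ (the relative $\sigma_i$ comes from momentum conservation and $\LRB{PP}=0$), so the square-bracket double sum is weighted by $\sigma_i\sigma_j$, i.e.\ $\delta^1\Hb_2\eqd\tfrac{1}{2x}\sum_{i,j}\sigma_i\sigma_j\LRB{\bi^I\bj^J}\eta_{iI}^1\eta_{jJ}^1$, which flips the off-diagonal term relative to the literal reading of \eqref{eqn: delta H_a H_b}. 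The defect originates in the stated form of \eqref{eqn: delta H_a H_b}, which you took at face value (it omits the $\sigma_i\sigma_j$ weights), but a valid proof must either correct that input or verify the cancellation explicitly rather than asserting it --- as written, the central step of your argument fails. With the same-sign expansion, the rest of your proof goes through verbatim, including the $1\leftrightarrow2$ argument for the second identity.
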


\begin{proof} Let $m=1$,
\ie
&\LRA{\zeta P}\delta^1\Ha^1\Hb_2\\
=&-\left( \LRA{\zeta P}\Ha^1 \right)\left( \delta^1\Hb_2 \right)\\
=&{1\over x}\left( \LRA{\zeta\bo^I}\eta_{1I}^1+\LRA{\zeta\bt^I}\eta_{2I}^2 \right)\left( \LRB{\bo^K\bt^L}\eta_{1K}^1\eta_{2L}^1+\etas{1}{1}+\etas{2}{1} \right)\\
=&{1\over x}\left( \MixLeft{\zeta}{p_1}{\bt^I}\eta_{2I}^1\etas{1}{1}+\LRA{\zeta\bo^I}\eta_{1I}^1\etas{2}{1}+\MixLeft{\zeta}{p_2}{\bo^I}\eta_{1I}^1\etas{2}{1}+\LRA{\zeta\bt^I}\eta_{2I}^1\etas{1}{1} \right)\\
=&-\LRA{\zeta P}\cdot{1\over x} \left( \LRB{P\bo^I}\eta_{1I}^1\etas{2}{1}+\LRB{P\bt^I}\eta_{2I}^1\etas{1}{1} \right)
\;.\fe
The second equation can be proven in similar fashion.
\end{proof}

When $\Ha^A$, $\Hb_A$, and their products meet the delta function, we are able to recast them into a Lorentz invariant and R-charge symmetric form. To simplify notations, let's define

\ie
&\Math{Q}_{\alpha}^A\equiv\lambda_{1\alpha}^I\eta_{1I}^A+\lambda_{2\alpha}^I\eta_{2I}^A\;\;;\;\;
\tilde{\Math{Q}}_{\dot{\alpha}}^A\equiv\tilde{\lambda}_{1\dot{\alpha}}^I\eta_{1I}^A-\tilde{\lambda}_{2\dot{\alpha}}^I\eta_{2I}^A\\
&\Math{Q}_{P\alpha}^A\equiv\lambda_{P\alpha}\eta_{P}^A\;\;;\;\;
\tilde{\Math{Q}}_{P\dot{\alpha}}^A\equiv\tilde{\lambda}_{P\dot{\alpha}}\eta_{P}^A\\
&\eta_i^A\cdot\eta_i^B\equiv-\frac{1}{2}\epsilon^{IJ}\eta_{iI}^A\eta_{iJ}^B\;\;;\;\;\eta_P\cdot\eta_P\equiv\frac{1}{2}\eta_P^A\eta_{PA}
\;.\fe
The following equations follow from \eqref{eqn: delta H_a H_b}
\ie
\label{eqn: meeting the delta functions (order 4)}
&\delta^1\delta^2=-\frac{1}{2}\LRA{\Math{Q}^AP}\LRA{\Math{Q}_AP}
\\
&\delta^1\delta^2\left(\eta_P\cdot\eta_P\right)=\frac{1}{4}\LRA{\Math{Q}^A\Math{Q}_P^{B}}\LRA{\Math{Q}_A\Math{Q}_{PB}}
\\
&\delta^1\delta^2\HHa=\frac{1}{12}\LRA{\Math{Q}^A\Math{Q}^B}\LRA{\Math{Q}_A\Math{Q}_B}
\\
&\delta^1\delta^2\HHab=-\frac{1}{12x}\LRA{\Math{Q}^A\Math{Q}^B}\LRB{\tilde{\Math{Q}}_A\tilde{\Math{Q}}_B}
\\
&\delta^1\delta^2\HHb=-\frac{1}{12x^2}\LRB{\tilde{\Math{Q}}^A\tilde{\Math{Q}}^B}\LRB{\tilde{\Math{Q}}_A\tilde{\Math{Q}}_B}
\\
&\delta^1\delta^2\Ha^A\eta_{PA}=\frac{1}{3}\LRA{\Math{Q}^A\Math{Q}^B}\LRA{\Math{Q}_A\Math{Q}_{PB}}
\\
&\delta^1\delta^2\eta_{P}^A\Hb_A=-\frac{1}{3x}\LRB{\tilde{\Math{Q}}^A\tilde{\Math{Q}}^B}\LRA{\Math{Q}_A\Math{Q}_{PB}}
\;.\fe
Things will be more complicated when $\delta^1\Ha^1\Hb_2$ or $\delta^2\Ha^2\Hb_1$ appear, but as we can see from \eqref{eqn: delta H_a H_b = R}, we have
\ie
\label{eqn: meeting the delta functions (order 5)}
&\delta^1\delta^2\HHa\eta_P^A\Hb_A=-\frac{2}{9x}\LRA{\Math{Q}^{A} \Math{Q}^{B}}  \LRB{\tilde{\Math{Q}}_{A} \tilde{\Math{Q}}_{P}^{C}}\left(\ETAsL{B}{C}\right)
\\
&\delta^1\delta^2\HHb\Ha^A\eta_{PA}=\frac{2}{9x^2}\LRB{\tilde{\Math{Q}}^A \tilde{\Math{Q}}^{B}} \LRB{\tilde{\Math{Q}}_A\tilde{\Math{Q}}_P^{C}} \left( \ETAsL{B}{C} \right)
\;,\fe
and finally
\ie
\label{eqn: meeting the delta functions (order 6)}
&\delta^1\delta^2\HHa\HHb\\
&\;\;\;\;\;=-\frac{1}{6x^2}\LRB{\tilde{\Math{Q}}^AP} \LRB{\tilde{\Math{Q}}_AP} \left( \ETAsU{B}{C} \right) \left( \ETAsL{B}{C} \right)
\\
&\delta^1\delta^2\HHa\HHb\left(\eta_P\cdot\eta_P\right)\\
&\;\;\;\;\;=\frac{1}{12x^2} \LRB{\tilde{\Math{Q}}^{(A} \tilde{\Math{Q}}_{P}^{B)}} \LRB{\tilde{\Math{Q}}_{A} \tilde{\Math{Q}}_{PB}} \left(\ETAsU{C}{D} \right)\left(\ETAsL{C}{D} \right)
\;.\fe

%%%%%%%%%%%%%%%%%%%%%%%%%%%%%%%%%%%%%%%%%%%%%%%%%%%%%%%%%%%%%%%

\section{More on $\MmmPr$}
\label{appendix: amplitude = f(Ha,Hb)}

The amplitude is proportional to $\delta^1\delta^2$ follows straightforwardly given the generators, see \eqref{eqn: A=delta B}. The main goal in this section is to study the rest of the amplitude, i.e., $\MmmPr$, in more detail, and proof that it is a function of $\Ha$ and $\Hb$, in other words, all $\eta_i$ ($\eta_i=\;\eta_1\text{ or }\eta_2$) in $\MmmPr$ must be of the form
\ie
\sum_{i=1}^2\lambda_i ^I\eta_{iI}^A  \;\;\text{or}\;\;  \sum_{i=1}^2\tilde\lambda_i ^I\sigma_i\eta_{iI}^A
\;.\fe
$\MmmPr$ is a function of $\eta_i$'s, and we can expand it according to the order of $\eta_i$'s
\ie
\MmmPr=\MmmPr^{(0)}+\MmmPr^{(1)}+\MmmPr^{(2)}+\cdots
\;.\fe
Observe the form of the generators \eqref{eqn: QQ+ full}, they have a multiplicative part in $\eta_i$ and a differential part in $\eta_i$. Therefore, since $\MmmPr$ must satisfy \eqref{eqn:constraints on B}, we have (see \eqref{eqn: definition of D and D+} for definitions)
\ie
\label{appendix eqn: recursion of Mres}
&\DaRaise{A} \MmmPr^{(n-2)} + \DaLower{A}\MmmPr^{(n)} + \LRA{\zeta P}\eta_P^A\MmmPr^{(n-1)} \eqd0\\
&\DbRaise{A} \MmmPr^{(n-2)} + \DbLower{A}\MmmPr^{(n)} + \LRB{\xi P}\frac{\p}{\p\eta_P^A}\MmmPr^{(n-1)} \eqd0
\;.\fe

\begin{lemma}
$ $\newline
The lowest order of $\MmmPr$ must be $\MmmPr^{(0)}$.
\end{lemma}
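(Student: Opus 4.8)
The plan is to argue by contradiction from the order‑graded supersymmetry constraints \eqref{appendix eqn: recursion of Mres}. Suppose the lowest nonvanishing $\eta_i$-order of $\MmmPr$ were $\MmmPr^{(k)}$ with $k\ge 1$, so that $\MmmPr^{(k-1)}\eqd\MmmPr^{(k-2)}\eqd 0$. Evaluating both relations in \eqref{appendix eqn: recursion of Mres} at $n=k$, the raising terms $\DaRaise{A}\MmmPr^{(k-2)}$, $\DbRaise{A}\MmmPr^{(k-2)}$ and the $\eta_P$-shift terms (which are built from $\MmmPr^{(k-1)}$) all drop out, leaving the pure lowering conditions
\ie
\DaLower{A}\MmmPr^{(k)}\eqd 0\;,\qquad \DbLower{A}\MmmPr^{(k)}\eqd 0\;,\qquad A=1,2\;.
\fe
The objective is then to show that these force $\MmmPr^{(k)}\eqd 0$, contradicting minimality.

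The key observation is that \eqref{appendix eqn: recursion of Mres} holds for \emph{every} admissible pair of reference spinors $\LA{\zeta}$, $\LB{\xi}$. Reading off the definitions \eqref{eqn: definition of D and D+}, $\DaLower{A}=\SinT\,\zeta^\alpha\sum_i\sigma_i\lambda_{i\alpha}^I\,\p/\p\eta_{iA}^I$ and $\DbLower{A}=\CosT\,\xi_{\dot\alpha}\sum_i\tilde\lambda_i^{I\dot\alpha}\,\p/\p\eta_i^{IA}$, so the arbitrariness of $\zeta^\alpha$ and $\xi_{\dot\alpha}$ lets me strip off the reference spinors and upgrade the two scalar conditions to the spinor-valued system
\ie
\sum_{i=1}^2\sigma_i\,\lambda_{i\alpha}^I\,\frac{\p}{\p\eta_{iA}^I}\,\MmmPr^{(k)}\eqd 0\;,\qquad \sum_{i=1}^2\tilde\lambda_i^{I\dot\alpha}\,\frac{\p}{\p\eta_i^{IA}}\,\MmmPr^{(k)}\eqd 0\;,
\fe
for all $\alpha,\dot\alpha$ and $A$ (assuming $\CosT,\SinT\ne 0$, i.e. away from the $Z=0$ and BPS loci, which are treated separately). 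These eight first-order operators are precisely the annihilation halves of the generators \eqref{eqn: QQ+ full}.

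Next I would show that these eight operators span the full space of $\eta_i$-derivatives modulo the $\delta^A$-ideal. All of them annihilate $\delta^1$ and $\delta^2$ --- for the angle family this follows from the equal-mass relation $\sigma_1+\sigma_2=0$, and for the square family from the nullness of $P$ together with $p_1+p_2=-P$ --- so they act within the six-dimensional cotangent space of the quotient algebra. Splitting by R-charge, in each R-sector the two angle coefficients $\sigma_i\lambda_{i\alpha}^I$ and the two square coefficients $\tilde\lambda_i^{I\dot\alpha}$ are generically of rank three on the three surviving directions, hence jointly surjective onto the quotient derivatives. Consequently every $\eta_i$-derivative of $\MmmPr^{(k)}$ vanishes on the delta-function support, and Euler's identity $\sum_{i,I,A}\eta_{iI}^A\,\p/\p\eta_{iI}^A\,\MmmPr^{(k)}=k\,\MmmPr^{(k)}$ yields $k\,\MmmPr^{(k)}\eqd 0$, so $\MmmPr^{(k)}\eqd 0$ for $k\ge 1$ --- the desired contradiction.

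The step I expect to be the main obstacle is this surjectivity/rank claim: individually the angle operators annihilate the $\Ha$-directions and the square operators annihilate the $\Hb$-directions, so neither family alone is complete, and one must verify that jointly they degenerate nowhere on the equal-mass, momentum-conserving $3$-pt locus. The cleanest way to see why this should hold is representation-theoretic: the lowering operators are the annihilation operators of the fermionic oscillator algebra dictated by $\{Q,\tQ\}=p$, whose common kernel for generic momenta is exactly the one-dimensional Clifford vacuum, namely the $\eta_i$-order-zero sector. Making this non-degeneracy explicit at the special kinematics, and pairing it with the reference-spinor stripping above, is the technical heart of the argument.
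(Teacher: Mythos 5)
Your route is genuinely different from the paper's. The paper solves the lowering constraints explicitly: it writes $\p\MmmPr^{(low)}/\p\eta_i^{IA}=\LRA{\bi_I f_{iA}}$, uses the arbitrariness of $\zeta$ to force $f_{1A}=f_{2A}$, then the special choice $\LA{\zeta}=\LB{\xi}p_2$ to force $\RA{f_A}\propto\RA{P}$, and integrates to get $\MmmPr^{(low)}= f_A\sum_i\eta_i^{IA}\LRA{\bi_I P}+(\eta_i\hbox{-free})\eqd(\eta_i\hbox{-free})$. You instead quotient by the ideal $(\delta^1,\delta^2)$ and run a rank count plus the Euler grading operator. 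Your preparatory steps are sound: stripping $\zeta,\xi$ is legitimate (the constraints hold for all admissible reference spinors, which span the spinor spaces, and the dependence on them is linear); the stripped angle operators kill $\delta^A$ because $\lambda_{i\alpha}^I\lambda_{i\beta I}\propto m\,\epsilon_{\alpha\beta}$ with equal masses and $\sigma_1+\sigma_2=0$, and the square ones because $\lambda_P^\alpha(p_1+p_2)_{\alpha\dot\alpha}=-\lambda_P^\alpha P_{\alpha\dot\alpha}=0$; and the Euler step correctly converts ``all derivatives vanish modulo the ideal'' into $\MmmPr^{(k)}\eqd0$ for $k\ge1$, which is equivalent to the paper's conclusion because the ideal is graded in $\eta_i$-order.

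The genuine gap is exactly the step you flag and defer: the joint surjectivity (rank-3) claim is asserted only ``generically,'' and 3-pt kinematics is as non-generic as it gets ($2p_1\cdot P=2p_2\cdot P=0$; for a nonvanishing amplitude the momenta are complex), so a generic-rank appeal is precisely the kind of argument that can die on this locus. Without verifying it, the proof is incomplete at its load-bearing step --- and that verification is where all the spinor algebra the paper performs actually lives. It can be closed: per R-sector the four coefficient rows all lie in the 3-dimensional orthogonal complement of $v_{(iI)}=\LRA{P\bi^I}$ (that is the content of killing $\delta^A$), so the rank is at most 3, and it drops below 3 only if the 2-dimensional angle row space coincides with the 2-dimensional square row space, i.e.\ only if a single matrix $c_\alpha{}^{\dot\alpha}$ satisfies $\sigma_i\lambda_{i\alpha}^I=c_\alpha{}^{\dot\alpha}\tilde\lambda_{i\dot\alpha}^I$ for both $i$. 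The $i=1$ equation fixes $c\propto p_1$ (massive spinor matrices are invertible), and then the $i=2$ equation, using $p_1\RB{\bt^I}=-p_2\RB{\bt^I}-P\RB{\bt^I}=-m\RA{\bt^I}-\RA{P}\LRB{P\bt^I}$, would force $\lambda_{P\alpha}\LRB{P\bt^I}=0$ for both $I$, which is impossible since $\tilde\lambda_2$ has rank 2. One correction to your caveat: since $\CosT\SinT=Z/4m$, the stripping fails only at $Z=0$ --- where the lemma itself is false (at $\tZ=0$ the solution $S^{(2)}$ has lowest $\eta_i$-order 2, coming from $\Ah^{(2)}=\HHb$) --- while the BPS locus has $\CosT=\SinT\neq0$ and is excluded earlier, at the delta-function extraction, not here; the paper makes the same $\SinT,\CosT\neq0$ assumption silently when it divides them out in \eqref{appendix eqn: lowest Mres-satisfy}.
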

\begin{proof}
$ $\newline
Assume the lowest order of $\MmmPr$ is $\MmmPr^{(low)}$, and we must have
\ie
\label{appendix eqn: lowest Mres-satisfy}
&\DaLower{A}\MmmPr^{(low)} \eqd 0 \Rightarrow
\sum_{i=1}^2 \LRA{\zeta \bi^I} \sigma_i\frac{\partial}{\partial\eta_{iA}^I}\MmmPr^{(low)}\eqd0 \\
&\DbLower{A}\MmmPr^{(low)} \eqd 0 \Rightarrow
\sum_{i=1}^2 \LRB{\xi \bi^I}  \frac{\partial}{\partial\eta_{i}^{IA}}\MmmPr^{(low)}\eqd0
\;.\fe
Since $\frac{\p}{\eta_{i}^{IA}}\MmmPr^{(n)}$ carries a LG index $I$, we have three possibilities for $\frac{\p}{\eta_{i}^{IA}}\MmmPr^{(n)}$
\ie
\label{appendix eqn: lowest Mres-diff}
&\frac{\p}{\eta_{i}^{IA}}\MmmPr^{(low)}=\eta_{iIA}\;\;\text{or}\;\;\LRA{\bi_I f_{iA}}\;\;\text{or}\;\; \LRB{\bi_I \tilde f_{iA}}
\;.\fe
However, the first one is impossible since it would imply $\MmmPr^{(low)}\propto\eta_i^{IA}\eta_{iIA}=0$. In addition, one of the second and the third is redundant, since we can always use $p_i\RA{\bi^I}=\RB{\bi^I}$ to convert $\RA{\bi^I}$ to $\RB{\bi^I}$. Let's keep $\frac{\p}{\eta_{i}^{IA}}\MmmPr^{(low)}=\LRA{\bi_I f_{iA}}$\footnote{Note that $\RA{f_{iA}}$ might carry Grassmann variables.} and \eqref{appendix eqn: lowest Mres-satisfy} implies
\begin{subequations}
\begin{align}
&\LRA{\zeta f_{1A}}-\LRA{\zeta f_{2A}} \eqd 0 \label{appendix eqn: lowest=0 1}\\
&\MixRight{\xi}{p_1}{f_{1A}}+\MixRight{\xi}{p_2}{f_{2A}} \eqd 0 \label{appendix eqn: lowest=0 2}
\;.\end{align}
\end{subequations}
\eqref{appendix eqn: lowest=0 1} implies $\RA{f_{1A}}=\RA{f_{2A}}\equiv \RA{f_A}$. Set $\LA{\zeta}=\LA{\xi}p_2$, and add \eqref{appendix eqn: lowest=0 1} to \eqref{appendix eqn: lowest=0 2}, we get
\ie
&\MixRight{\xi}{(p_1+p_2)}{f_{A}}=-\LRB{\zeta P}\LRA{P f_{A}}\eqd0\Rightarrow\LRA{P f_{A}}\eqd0\Rightarrow\RA{f_{A}}\eqd f_{A}\RA{P}
\;.\fe
Since $\frac{\p}{\eta_{i}^{IA}}\MmmPr^{(low)}=\LRA{\bi_I f_{A}}$, $\RA{f_A}$ can not be a function of $\eta_{1I}^A$ and $\eta_{2I}^A$, and the solution to $\frac{\p}{\eta_{i}^{IA}}\MmmPr^{(low)}\eqd f_{A}\LRA{\bi_I P}$ is
\ie
\MmmPr^{(low)}= f_{A} \sum_i \eta_i^{IA}\LRA{\bi_I P}+\text{($\eta_i$-free term)} \eqd \text{($\eta_i$-free term)}
\;,\fe
where in the last equation, we used $\sum_i \eta_i^{IA}\LRA{\bi_I P}\eqd0$. Therefore, $\MmmPr^{(low)}$ is $\eta_i$-free, in other words, $\MmmPr^{(low)}=\MmmPr^{(0)}$.
\end{proof}

\begin{lemma}
\label{lemma: Recursion relation}
$ $\newline
If $\MmmPr^{(n-1)}$ and $\MmmPr^{(n-2)}$ is a function of $\Ha$ and $\Hb$, then so is $\MmmPr^{(n)}$.
\end{lemma}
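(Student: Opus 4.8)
The plan is to prove the statement by induction driven by the recursion \eqref{appendix eqn: recursion of Mres}, treating the $\eta_i$-lowering operators $\DaLower{A}$ and $\DbLower{A}$ as derivations on the ring generated by $\Ha$ and $\Hb$. First I would isolate $\MmmPr^{(n)}$ by moving everything else to the right-hand side. Since $\MmmPr^{(n-1)}$ and $\MmmPr^{(n-2)}$ are functions of $\Ha$ and $\Hb$ by hypothesis, the raising terms $\DaRaise{A}\MmmPr^{(n-2)}$ and $\DbRaise{A}\MmmPr^{(n-2)}$ are again functions of $\Ha,\Hb$ by \eqref{eqn: raising eta order} (raising merely multiplies by $\Ha^A$ or $\Hb_A$), and the $\eta_P$-terms are such functions multiplied by the spectator $\eta_P^A$. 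Hence the recursion collapses to the pair $\DaLower{A}\MmmPr^{(n)}\eqd F_a^A$ and $\DbLower{A}\MmmPr^{(n)}\eqd F_{bA}$, where $F_a^A$ and $F_{bA}$ are explicit functions of $\Ha$ and $\Hb$.

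Next I would construct a particular solution $G^{(n)}$ that is manifestly a function of $\Ha$ and $\Hb$. The key input is \eqref{eqn: lowering eta order}: on the subring generated by $\Ha,\Hb$, the operator $\DaLower{A}$ acts as a constant multiple of a derivative in $\Hb$ while annihilating $\Ha$, and $\DbLower{A}$ acts as a derivative in $\Ha$ while annihilating $\Hb$. Because $\Ha$ and $\Hb$ are Grassmann-odd, this subring is finite-dimensional, so solving $\DaLower{A}G^{(n)}\eqd F_a^A$ and $\DbLower{A}G^{(n)}\eqd F_{bA}$ is a finite linear-algebra problem, and I would integrate $F_a^A$ and $F_{bA}$ term by term to produce $G^{(n)}$.

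The uniqueness step then does the real work. The difference $\Delta^{(n)}\equiv\MmmPr^{(n)}-G^{(n)}$ satisfies $\DaLower{A}\Delta^{(n)}\eqd0$ and $\DbLower{A}\Delta^{(n)}\eqd0$. This is precisely the pair of conditions analyzed in the proof of the preceding lowest-order lemma, whose argument uses only these two annihilation relations to conclude that any such quantity is $\eta_i$-free. Since $\Delta^{(n)}$ is homogeneous of order $n>0$ in the $\eta_i$, being $\eta_i$-free forces $\Delta^{(n)}\eqd0$, so $\MmmPr^{(n)}\eqd G^{(n)}$ is a function of $\Ha$ and $\Hb$, closing the induction.

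The main obstacle is the integrability needed to build $G^{(n)}$: I must check that $F_a^A$ and $F_{bA}$ are mutually compatible, i.e. that a single function of $\Ha,\Hb$ has both of them as lowering images simultaneously. This is the analogue of a vanishing-curl condition $\partial_{\Ha}F_a^A=\partial_{\Hb}F_{bA}$, with the constants from \eqref{eqn: lowering eta order} inserted. I would verify it by applying $\DbLower{B}$ to the first recursion and $\DaLower{B}$ to the second and showing the two results coincide, using the known action on $\Ha,\Hb$ together with the lower-order recursion relations for $\MmmPr^{(n-1)}$ and $\MmmPr^{(n-2)}$. Equivalently, compatibility is automatic because the actual $\MmmPr^{(n)}$ already furnishes $F_a^A$ and $F_{bA}$ as genuine lowering images, so the condition is inherited from the existence of the amplitude itself.
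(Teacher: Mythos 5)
Your route is genuinely different from the paper's. The paper works at the level of first derivatives: it writes $\frac{\p}{\p\eta_i^{IA}}\MmmPr^{(n)}\equiv\LRA{\bi_I f_{iA}^{(n)}}$, feeds this into the recursion \eqref{appendix eqn: recursion of Mres}, and by choosing the reference spinors as $\LA{\zeta}=\LB{\xi}p_2$ (resp.\ $\LB{\xi}p_1$) extracts closed expressions for $\LRA{P f_{1}^{(n)A}}$ and $\LRA{P f_{2}^{(n)A}}$ in terms of $\Ha$, $\Hb$ and the lower-order data, together with the equality \eqref{appendix eqn: f1 and f2}; from this it classifies which contractions of $\eta_i$ can occur (only $\LRA{\zeta\bo^I}\eta_{1I}^A+\LRA{\zeta\bt^I}\eta_{2I}^A$, $\LRB{\xi\bo^I}\eta_{1I}^A-\LRB{\xi\bt^I}\eta_{2I}^A$, and $\LRA{P\bo^I}\eta_{1I}^A$, all expressible via $\Ha,\Hb$). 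You instead solve the recursion as a first-order Grassmann system: build a particular solution $G^{(n)}$ inside the finite ring generated by $\Ha,\Hb,\eta_P$ (a Grassmann Poincar\'e lemma, legitimate since \eqref{eqn: lowering eta order} makes $\DaLower{A},\DbLower{A}$ act there as derivatives in $\Hb_A,\Ha^A$), then kill the homogeneous solution. The paper's proof buys an explicit identification of the building blocks with bare-hands spinor algebra; yours buys modularity — existence plus uniqueness — and reuses the lowest-order lemma as a kernel statement.

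Two steps need patching, though both can be repaired. First, your "compatibility is automatic" shortcut is not immediate as stated: $\MmmPr^{(n)}$ furnishes $F_a^A,F_{bA}$ as lowering images only modulo the delta function ($\eqd$), and applying a second lowering operator to an $\eqd$ relation is in general illegal. It is rescued by two facts you should state: (i) $\DaLower{A}$ and $\DbLower{A}$ annihilate $\delta^B$ once momentum conservation and equal masses are imposed (the $\sigma_i$-weighted sum gives $m(\sigma_1+\sigma_2)=0$, and the unweighted sum gives $\lambda_P^\alpha(p_1+p_2)_{\alpha\dot\alpha}=-\lambda_P^\alpha P_{\alpha\dot\alpha}=0$ — this is the same mechanism behind the paper's claim that the supercharges commute with $\prod_A\delta^A$), so lowering operators commute with multiplication by $\delta^1\delta^2$ and the compatibility conditions do descend; and (ii) multiplication by $\delta^1\delta^2$ is injective on functions of $\Ha,\Hb,\eta_P$ (the $\delta$'s, $\Ha$'s and $\Hb$'s can be completed to a Grassmann basis), so $\eqd$ relations among such functions are exact and the term-by-term integration makes sense. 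Second, your uniqueness step invokes the lowest-order lemma verbatim, but that proof's final integration step uses that the coefficient $f_A$ in $\frac{\p}{\p\eta_i^{IA}}X\eqd f_A\LRA{\bi_I P}$ is $\eta_i$-independent, which is special to the lowest order; for $\Delta^{(n)}$ with $n\ge 2$ the corresponding $f_A$ carries $\eta_i$'s. Replace that step by the Euler-operator argument: contracting with $\sum\eta_i^{IA}\frac{\p}{\p\eta_i^{IA}}$ gives $n\,\Delta^{(n)}\eqd\pm\sum_A\delta^A f_A\eqd 0$, since $\delta^A\delta^1\delta^2=0$. With these two repairs your induction closes and proves the lemma.
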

\begin{proof}
$ $\newline
Define $\frac{\p}{\eta_{i}^{IA}}\MmmPr^{(n)}\equiv\LRA{i_I f_{iA}^{(n)}}$ and \eqref{appendix eqn: recursion of Mres} implies
\begin{subequations}
\begin{align}
\SinT\sum_i m \sigma_i\LRA{\zeta f_{i}^{(n)A}}& \eqd
\LRA{\zeta P} \left( \CosT\Ha^A\MmmPr^{(n-2)}+\eta_P^A\MmmPr^{(n-1)} \right) \label{appendix eqn: M(n-2) and f(n) 1}\\
\CosT\sum_i \MixRight{\xi}{p_i}{f_{i}^{(n)A}}& \eqd
\LRB{\xi P} \left( -\SinT\Hb^A\MmmPr^{(n-2)}+\frac{\p}{\p\eta_P^A}\MmmPr^{(n-1)} \right) \label{appendix eqn: M(n-2) and f(n) 2}
\;.\end{align}
\end{subequations}
If we set $\LA{\zeta}=\LB{\xi}p_i$ and linear combine \eqref{appendix eqn: M(n-2) and f(n) 1} and \eqref{appendix eqn: M(n-2) and f(n) 2}, we get
\ie
\CosT\SinT\LRA{P f_{1}^{(n)A}} \eqd -\left( \frac{\CosTsqr}{x}\Ha^A+\SinTsqr\Hb^A \right)\MmmPr^{(n-2)}
-\CosT\eta_P^A\MmmPr^{(n-1)}+\SinT\frac{\p}{\p\eta_P^A}\MmmPr^{(n-1)}
&\;\;\text{,if $\LA{\zeta}=\LB{\xi}p_2$}\\
\CosT\SinT\LRA{P f_{2}^{(n)A}} \eqd -\left( \frac{\CosTsqr}{x}\Ha^A+\SinTsqr\Hb^A \right)\MmmPr^{(n-2)}
-\CosT\eta_P^A\MmmPr^{(n-1)}+\SinT\frac{\p}{\p\eta_P^A}\MmmPr^{(n-1)}
&\;\;\text{,if $\LA{\zeta}=\LB{\xi}p_1$}
\;.\fe
This not only tells us that both $f_{1}^{(n)A}$ and $f_{2}^{(n)A}$ are functions of $\Ha$ and $\Hb$, but also implies
\ie
\label{appendix eqn: f1 and f2}
\LRA{P f_{1}^{(n)A}}=\LRA{P f_{2}^{(n)A}}
\;.\fe
This implies whenever $\lambda_{i\alpha}^I\eta_{iI}^A$ appears in $\MmmPr^{(n)}$, it must be either $\LRA{\zeta\bo^I}\eta_{1I}^A+\LRA{\zeta\bt^I}\eta_{2I}^A$ or $\LRA{P\bo^I}\eta_{1I}^A \eqd -\LRA{P\bt^I}\eta_{2I}^A$. In other words, $\LRA{\zeta\bo^I}\eta_{1I}^A-\LRA{\zeta\bt^I}\eta_{2I}^A$ can't appear in $\MmmPr$. In addition, $\LRB{\xi\bo^I}\eta_{1I}^A-\LRB{\xi\bt^I}\eta_{2I}^A$ is also a possible choice, since
\ie
\LRB{\xi\bo^I}\eta_{1I}^A-\LRB{\xi\bt^I}\eta_{2I}^A=-\MixRight{\xi}{p_2}{\bo^I}\eta_{1I}^A-\MixRight{\xi}{p_2}{\bt^I}\eta_{1I}^A+\LRB{\xi P}\LRA{P\bo^I}\eta_{1I}^A
\;.\fe
By similar arguments, $\LRB{\xi\bo^I}\eta_{1I}^A+\LRB{\xi\bt^I}\eta_{2I}^A$ is prohibited. Therefore, only the following terms can exist in $\MmmPr^{(n)}$,
\ie
&\LRA{\zeta\bo^I}\eta_{1I}^A+\LRA{\zeta\bt^I}\eta_{2I}^A=\LRA{\zeta P}\Ha^A \\
&\LRB{\xi\bo^I}\eta_{1I}^A-\LRB{\xi\bt^I}\eta_{2I}^A=\LRB{\xi P}\Hb^A \\
&\LRA{P\bo^I}\eta_{1I}^A\eqd \frac{1}{x}\Ha^A-\Hb^A
\;.\fe
The last identity follows from Lemma \ref{appendix lemma: LRA{P1}eta_{1}^A}. As a result, $\MmmPr^{(n)}$ is a function of $\Ha$ and $\Hb$.
\end{proof}

Given Lemma \ref{lemma: Recursion relation}, and given $\MmmPr^{(-1)}=0$, we can conclude that $\MmmPr^{(1)}$ is function of $\Ha$ and $\Hb$. By iteration, we can conclude all $\MmmPr^{(n)}$, and thus $\MmmPr$, are functions of $\Ha$ and $\Hb$.

\bibliography{mybib}{}

\begin{thebibliography}{10}

\bibitem{Arkani-Hamed:2017jhn}
Nima Arkani-Hamed, Tzu-Chen Huang, and Yu-tin Huang.
\newblock {Scattering Amplitudes For All Masses and Spins}.
\newblock 9 2017.

\bibitem{Elvang:2015rqa}
Henriette Elvang and Yu-tin Huang.
\newblock {\em {Scattering Amplitudes in Gauge Theory and Gravity}}.
\newblock Cambridge University Press, 4 2015.

\bibitem{Boels:2011zz}
Rutger~H. Boels and Christian Schwinn.
\newblock {On-shell supersymmetry for massive multiplets}.
\newblock {\em Phys. Rev. D}, 84:065006, 2011.

\bibitem{Herderschee:2019ofc}
Aidan Herderschee, Seth Koren, and Timothy Trott.
\newblock {Massive On-Shell Supersymmetric Scattering Amplitudes}.
\newblock {\em JHEP}, 10:092, 2019.

\bibitem{Cachazo:2018hqa}
Freddy Cachazo, Alfredo Guevara, Matthew Heydeman, Sebastian Mizera, John~H.
  Schwarz, and Congkao Wen.
\newblock {The S Matrix of 6D Super Yang-Mills and Maximal Supergravity from
  Rational Maps}.
\newblock {\em JHEP}, 09:125, 2018.

\bibitem{Caron-Huot:2018ape}
Simon Caron-Huot and Zahra Zahraee.
\newblock {Integrability of Black Hole Orbits in Maximal Supergravity}.
\newblock {\em JHEP}, 07:179, 2019.

\bibitem{Chung:2018kqs}
Ming-Zhi Chung, Yu-Tin Huang, Jung-Wook Kim, and Sangmin Lee.
\newblock {The simplest massive S-matrix: from minimal coupling to Black
  Holes}.
\newblock {\em JHEP}, 04:156, 2019.

\bibitem{Arkani-Hamed:2019ymq}
Nima Arkani-Hamed, Yu-tin Huang, and Donal O'Connell.
\newblock {Kerr black holes as elementary particles}.
\newblock {\em JHEP}, 01:046, 2020.

\bibitem{Aoude:2020mlg}
Rafael Aoude, Ming-Zhi Chung, Yu-tin Huang, Camila~S. Machado, and Man-Kuan
  Tam.
\newblock {Silence of Binary Kerr Black Holes}.
\newblock {\em Phys. Rev. Lett.}, 125(18):181602, 2020.

\bibitem{Chen:2021huj}
Bo-Ting Chen, Ming-Zhi Chung, Yu-tin Huang, and Man~Kuan Tam.
\newblock {Minimal spin deflection of Kerr-Newman and Supersymmetric black
  hole}.
\newblock 6 2021.

\end{thebibliography}
\bibliographystyle{unsrt}

\end{document}